\newcommand{\Gt}{G_\mathsf{t}}
\newcommand{\Vt}{V_\mathsf{t}}
\newcommand{\vt}{v_\mathsf{t}}
\newcommand{\Et}{E_\mathsf{t}}
\newcommand{\et}{e_\mathsf{t}}
\newcommand{\advG}{G^\mathsf{adv}}
\newcommand{\xe}{\mathbf{x}_e}
\newcommand{\xet}{\mathbf{x}_{\et}}
\newcommand{\TCD}{\mathsf{TCD}}
\newcommand{\TCDalpha}{\mathsf{TCD}_\alpha}
\newcommand{\TCDone}{\mathsf{TCD}_1}
\newcommand{\TCDtwo}{\mathsf{TCD}_2}
\newcommand{\TCDmax}{\mathsf{TCD}_\infty}
\newcommand{\TTCD}{\mathsf{TTCD}}
\newcommand{\Dtrain}{D_\mathsf{train}}
\newcommand{\fr}{f_\mathsf{r}}
\newcommand{\bbR}{\mathbb{R}}
\newtheorem{definition}{Definition}
\newtheorem{theorem}{Theorem}
\newtheorem{lemma}{Lemma}
\newif\ifconferenceon\conferenceontrue
\newcommand{\arxiv}[1]{}
\newcommand{\conference}[1]{#1}
\newcommand{\arxiv}[1]{#1}
\newcommand{\conference}[1]{}
\begin{document}

\title{R-HTDetector: Robust Hardware-Trojan Detection Based on Adversarial Training}

\author{
    \IEEEauthorblockN{
        Kento Hasegawa\IEEEauthorrefmark{1}\textsuperscript{\textsection},
        Seira Hidano\IEEEauthorrefmark{1}\textsuperscript{\textsection},
        Kohei Nozawa\IEEEauthorrefmark{2},
        Shinsaku Kiyomoto\IEEEauthorrefmark{1}, and\\
        Nozomu Togawa\IEEEauthorrefmark{2}
    }
    \IEEEauthorblockA{
        \IEEEauthorrefmark{1}KDDI Research, Inc.,
        \IEEEauthorrefmark{2}Waseda University
    }
}

\maketitle

\begingroup\renewcommand\thefootnote{\textsection}
\footnotetext{Equal contribution}
\endgroup

\begin{abstract}
    Hardware Trojans~(HTs) have become a serious problem, and extermination of them is strongly required for enhancing the security and safety of integrated circuits. An effective solution is to identify HTs at the gate level via machine learning techniques. However, machine learning has specific vulnerabilities, such as \emph{adversarial examples}. In reality, it has been reported that adversarial modified HTs greatly degrade the performance of a machine learning-based HT detection method. Therefore, we propose a robust HT detection method using adversarial training (\emph{R-HTDetector}). We formally describe the robustness of R-HTDetector in modifying HTs. 
    Our work gives the world-first adversarial training for HT detection with theoretical backgrounds. We show through experiments with Trust-HUB benchmarks that R-HTDetector overcomes adversarial examples while maintaining its original accuracy.
\end{abstract}

\begin{IEEEkeywords}
    adversarial examples, adversarial training, hardware Trojans, machine learning, gate-level netlists
\end{IEEEkeywords}

\section{Introduction}\label{sec:introduction}

The increase in hardware devices expands the demand for integrated circuits (ICs).
The outsourcing of IC design and manufacturing to third parties is expected to realize the efficient development of IC products.
When third parties are involved in a complicated hardware supply chain, there is apprehension that malicious circuits, which are referred to as \emph{hardware Trojans}~(HTs), are inserted into products~\cite{Francq2015,Xiao2016}.
HTs cause serious security and safety problems, such as information leakage to the outside, performance degradation, and suspension of operations.
In particular, numerous third parties' modules and intellectual properties (IPs) are incorporated in the design process. 
Thus, developing detection techniques that can be applied to IC design information is strongly needed.
Fig.~\ref{fig:detection_scenario} shows a scenario of HT insertion and detection in the design process.

Several HT detection methods for IC design processes focus on gate-level circuits~\cite{Oya2015, Hasegawa2017a,Wang2019b,9338932,7577739,8702462}.
It was emphasized that the nets that compose a hardware Trojan (Trojan nets) have specific features~\cite{Oya2015}.
Since the size of a large circuit exceeds millions of nets, it is necessary to efficiently identify Trojan nets with these features.
Machine learning techniques have been introduced to HT detection based on structural features at the gate level~\cite{Hasegawa2017,Wang2019b,9338932}.
It has been experimentally demonstrated that machine learning-based HT detection methods achieve high detection accuracy by learning sufficient quality training data.
However, the vulnerability of HT detection models has not been sufficiently discussed.
Specifically, there is growing concern about the security of machine learning.
\emph{Adversarial examples} are one of the most serious threats to machine learning~\cite{Goodfellow2015}.
These examples are generated by adding small perturbations to an input example, resulting in the misclassification of a target classifier.
Various types of adversarial examples have been developed, and defending machine learning-based systems from adversarial examples becomes a significant challenge to be solved~\cite{Akhtar2018}.

\begin{figure}[t]
    \centering
    \includegraphics[width=1.0\columnwidth]{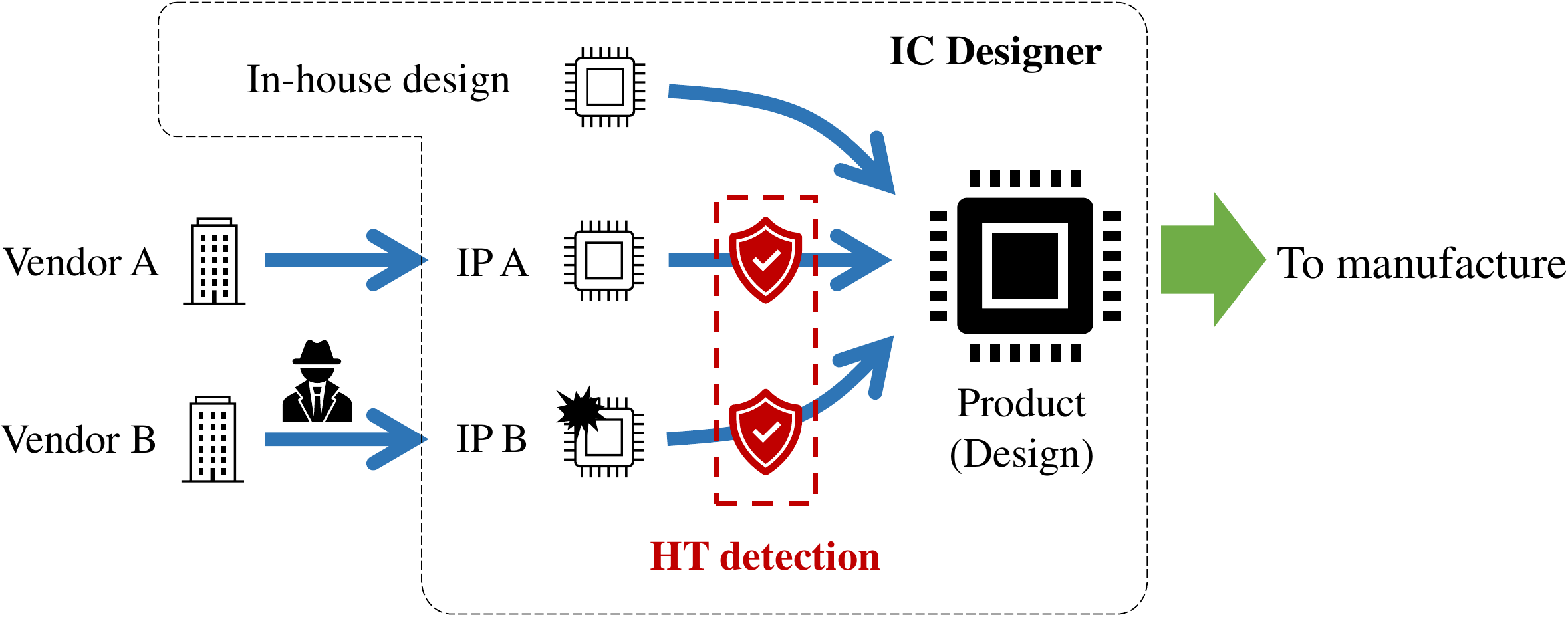}
    \caption{Scenario of HT detection in the IC design process. A designer incorporates IPs A and B provided by vendors A and B, respectively, into her IC product. HT detection is performed at the point in a red circle. Even though an HT is inserted, such as in IP B, the designer can effectively identify the HT.}
    \label{fig:detection_scenario}
\end{figure}
   
Existing HT detection methods have been designed without any consideration of adversarial examples~\cite{Liakos2020}.
These methods can be affected by adversarial examples.
A recent study~\cite{Nozawa2019a} proposed a framework for generating adversarial examples for HT detection with machine learning.
Adversarial examples for HT detection are generated by modifying a small number of gates that compose the HT.
This type of attack is referred to as a \emph{gate modification attack}.
It was also demonstrated that this attack degrades the performance of detection methods with a neural network.
As shown in Fig.~\ref{fig:detection_scenario}, malicious vendors are located on the outside of the organization of IC designers.
In this case, malicious vendors who know the detection method can modify their circuits and provide the modified circuits to IC designers.  
Therefore, gate modification attacks are realistic, and strong HT detection is required to overcome them.

\smallskip
\noindent\textbf{Contributions.}
In this paper, we propose a robust HT detection method (\emph{R-HTDetector}).
R-HTDetector is based on \emph{adversarial training}, which is a learning method that makes a classifier robust to adversarial examples.
Our main contributions are summarized as follows:
\begin{itemize}
 \item We generalize gate modification attacks and propose a metric named \emph{$\alpha$-TCD}, which effectively attacks machine learning-based HT detection.
 \item We design adversarial training for HT detection and propose a robust HT detection method named R-HTDetector.
 Our proposed method is based on adversarial examples generated with a metric referred to as \emph{targeted TCD} (\emph{TTCD}).
 \item We theoretically establish that adversarial training with TTCD overcomes any gate modification attack with $\alpha$-TCD.
 There exists no work to formally describe the robustness of adversarial training to HTs.
 Our work gives the world-first adversarial training on HT detection with theoretical backgrounds.
 \item We experimentally demonstrate with HT benchmarks that attacks with $\alpha$-TCD substantially degrade the performance of the neural network-based detection method~\cite{Hasegawa2017}.
 The attacks achieve a maximum decrease in \emph{true positive rates} (\emph{TPRs}) of 61.8\% using the Trust-HUB benchmark~\cite{Salmani2013,shakya2017benchmarking}.
 \item We also show that R-HTDetector can identify Trojan nets with high accuracy even though the HTs are adversarially distorted.
 R-HTDetector achieves average TPRs of grater than 78\% over the Trust-HUB benchmark for all attacks.
 Since the average TPR for the original Trojan nets is 75.2\%, this result indicates that R-HTDetector is robust to both original Trojans and adversarially altered Trojans.
 We further demonstrate that R-HTDetector can scale to another circuit design using the TRIT-TC benchmark.
\end{itemize}

\section{Backgrounds}
\label{sec:preliminaries}

This section presents the background and notations of the proposed method.

\subsection{HT Detection and Machine Learning}
\label{sub:htdetection}

An HT is a malicious circuit that is unintentionally inserted by an adversary.
In general, an HT is composed of a trigger circuit and payload circuit.
The trigger circuit monitors the internal signals to determine whether the trigger condition is satisfied.
The payload circuit is the malicious function that an adversary wishes to perform on the IC product.
The adversary carefully sets the trigger condition such that the HT is difficult to detect during testing or ordinal usage.

As shown in Fig.~\ref{fig:detection_scenario}, the scenario addressed in this paper is that an IC product is composed of several third-party IPs and that an untrusted third-party vendor provides an HT-infested IP to IC designers.
To ensure that HTs are not inserted into the IC design, IC designers need to examine the design.
An effective way to remove the threat of HTs from IC design is to detect them at early stages of development.
From the perspective of earlier-stage detection, gate-level design is an attractive target.
Gate-level design information involves the logic elements employed in a circuit and represents the structure of the circuit.
An early study~\cite{Oya2015} assigned a score that represents the likelihood of an HT to each net that composes a circuit. 
In particular, the trigger circuit that identifies complex trigger conditions is specific to HTs.
The structural features of trigger circuits and the combination of trigger circuits and typical payload circuits are useful for HT detection.
However, the structural features of HTs are manually extracted from benchmark netlists, and the threshold scores are carefully designed in \cite{Oya2015}.
It is difficult to quickly update the characteristic structures and scores for a novel HT.

Machine learning techniques effectively learn the structural features of an HT at the gate level~\cite{Hasegawa2017}.
Specific features are designed to effectively identify HTs, and a machine learning-based HT detection method with these features could detect Trojan nets with an 84.8\% detection rate~\cite{Hasegawa2017}.
Table~\ref{tb:11_features} lists the 51 features that represent each net $e$ in \cite{Hasegawa2017}.
The features represent the number of specific circuit elements or the minimum distance to specific circuit elements from the target net.
In \cite{Hasegawa2017}, 11 features are selected as a set of the most effective features for HT detection using a random forest classifier.
Although the effective features may depend on the training datasets, the structural features listed in Table~\ref{tb:11_features} help detect HTs from a gate-level netlist using machine learning.

\smallskip
\noindent
\textbf{Notations.} A gate-level netlist can be modeled as graph $G$, as discussed in \cite{8913629}, in which a node represents a gate and an edge represents a net.
$G$ consists of multiple gates $v \in V$ and nets $e \in E$.
If Trojan circuit $\Gt = (\Vt, \Et)$ is embedded in $G = (V, E)$, $V$ and $E$ include Trojan gates $\vt \in \Vt$ and nets $\et \in \Et$, respectively.
The method in \cite{Hasegawa2017} predicts whether a given net $e \in E$ is a Trojan net $\et \in \Et$ with a neural network.
Let $\xe \in \bbR^{n}$ denote an $n$-dimensional feature vector that represents net $e \in E$.
The prediction is performed with a neural network-based detection model $f:\bbR^{n} \rightarrow [0, 1]$ that maps feature vector $\xe$ to the probability that the corresponding net $e$ is a Trojan net.
The true positive rate~(TPR) and true negative rate~(TNR) are employed to evaluate the HT detection performance of the target model $f$.
Let $Y_\mathsf{t}$ (resp. $Y_\mathsf{n}$) be the set of nets predicted to be Trojan nets (resp. normal nets).
TPR and TNR are expressed as follows:
$\mathrm{TPR} = {|\Et \cap Y_\mathsf{t}|} / {|\Et|}$ and $\mathrm{TNR} = {|(E \setminus \Et) \cap Y_\mathsf{n}|} / {|E \setminus \Et|}$.
In particular, TPR (also known as recall) is a significant metric because we aim to catch as many Trojan nets as possible.

\begin{table}[t]
\caption{51 features that represent net $e$ presented in \cite{Hasegawa2017}.}
\label{tb:11_features}
\centering
\scalebox{1.00}{
\begin{tabular}{c|p{70mm}}
\hline
\# & Description of Feature \\
\hline
1--5 & Number of logic-gate fanins $n$-level away from $e$ ($1 \le n \le 5$).\\
6--10 & Number of flip-flops up to $n$-level away from the input side of $e$ ($1 \le n \le 5$).\\
11--15 & Number of flip-flops up to $n$-level away from the output side of $e$ ($1 \le n \le 5$).\\
16--20 & Number of multiplexers up to $n$-level away from the input side of $e$ ($1 \le n \le 5$).\\
21--25 & Number of multiplexers up to $n$-level away from the output side of $e$ ($1 \le n \le 5$).\\
26--30 & Number of up to $n$-level loops from the input side of $e$ ($1 \le n \le 5$).\\
31--35 & Number of up to $n$-level loops from the output side of $e$ ($1 \le n \le 5$).\\
36--40 & Number of constants (fixed at the high or low level) up to $n$-level away from the input side of $e$ ($1 \le n \le 5$).\\
41--45 & Number of constants (fixed at the high or low level) up to $n$-level away from the output side of $e$ ($1 \le n \le 5$).\\
46 & Minimum level to the primary input from $e$.\\
47 & Minimum level to the primary output from $e$.\\
48 & Minimum level to any flip-flop from the input side of $e$.\\
49 & Minimum level to any flip-flop from the output side of $e$.\\
50 & Minimum level to any multiplexer from the input side of $e$.\\
51 & Minimum level to any multiplexer from the output side of $e$.\\
\hline
\end{tabular}
}
\end{table}

\subsection{Gate Modification Attacks}
\label{sub:gm_attacks}
A recent study~\cite{Nozawa2019a} on HT detection proposed a framework for generating \emph{adversarial examples} against machine learning-based detection methods.

Adversarial examples and attacks using them are actively investigated as an emerging theme in AI security~\cite{10.1145/3374217}.
In image processing, such an attack is launched by adding imperceptible noise (also known as perturbation) to original images.
Carefully crafted perturbation fools a target model, resulting in misclassification.
To examine the robustness of machine learning models, the research of adversarial examples on specific applications has been expanded.

In \cite{Nozawa2019a}, the adversary adversarially modifies gates such that target detection model $f$ misclassifies Trojan nets that composes her Trojan circuit.
The attack is referred to as a \emph{gate modification attacks}.
Unlike images, circuits have specific constraints.
For instance, the modified circuit has to operate correctly.
Adding small noise (or applying small change) to the circuit design may destroy the original functionality of the product and/or HT, and consequently, testers or users notice the presence of unintended modification of the circuit.
In this sense, general attack methods that use adversarial examples cannot be applied to gate-level netlists.
The work in \cite{Nozawa2019a} thus introduced logically equivalent modification patterns.
Logically equivalent modification does not break the original functionality of the target circuit.
Therefore, once adversaries design HTs, they can easily generate variants.
As listed in Table~\ref{tb:11_features}, the structural feature-based HT detection observes the number of neighbor circuit elements and the minimum distance to specific circuit elements.
If the structure of the circuit is changed such that the altered circuit is logically equivalent, the feature values provided to a machine learning-based HT detection model are changed, and the small change may fool the model.
Logically equivalent modification is one of the most promising ways to automatically generate variants for the purpose of adversarial example attacks.

A large number of changes facilitates the detection of modified circuits.
To address this problem, \cite{Nozawa2019a} provided a metric for maximizing the number of misclassified Trojan nets with a small number of modifications.
The metric is referred to as the \emph{Trojan-net concealment degree} (\emph{TCD}).
The TCD with respect to Trojan nets $\Et$ and to detection model $f$ is represented by
\begin{align}
\TCD(\Et, f) = \frac{1}{|\Et|} \sum_{\et \in \Et} \log f(\xet).
\label{eq:tcd}
\end{align}
As described in Section~\ref{sec:preliminaries}, the detection performance of a model $f$ is evaluated as $\mathrm{TPR} = {|\Et \cap Y_\mathsf{t}|} / {|\Et|}$.
Minimizing $\sum_{\et \in \Et} \log f(\xet)$ decreases $|\Et \cap Y_\mathsf{t}|$, and thus, the detection model deteriorates.
The adversary can conceal many Trojan nets by selecting gates such that the TCD for the modified Trojan circuit is minimized.
However, since this optimization problem is difficult to solve, an alternative solution is developed in \cite{Nozawa2019a}.
In \cite{Nozawa2019a}, a gate that minimizes the metric is selected at a certain time, and this process is repeated at most $K$ times, i.e., $K$ gates are modified in total.

\subsection{Adversarial Training}
\label{sub:adversarial_training}
Adversarial training aims to make a classifier robust to adversarial examples~\cite{Goodfellow2015,Goodfellow2017,DBLP:conf/iclr/KurakinGB17}.
In adversarial training, adversarial examples are newly generated at each iteration, and added to the training dataset.
The training iteration is repeatedly performed a certain number of times.
In the adversarial training of image classification, an adversarial image is generated by manipulating an original image on a pixel-by-pixel basis.

When we consider adding the perturbation to the HTs, this process corresponds to manipulating a netlist on a net-by-net basis.
However, adversarial examples for HT detection are generated on a circuit-by-circuit basis because modifying a net affects the whole circuit~\cite{Nozawa2019a}.
Thus, it is not possible to simply apply typical adversarial training to HT detection.

\section{Threat Model}
\label{sec:threat_model}
This section presents the threat model for HT detection with machine learning.

As illustrated in Fig.~\ref{fig:detection_scenario}, HTs at the gate level are incorporated in an IC product during the design process.
The ideal way to exterminate all the HTs is to detect them and remove the compromised modules or IPs before they are integrated into the original design.
Thus, HT detection must be performed on each module or IP during the design process (e.g., a commercial service is available in \cite{htfinder}).
However, the adversary may be able to use the HT detection system as well.
In this paper, it is assumed that the adversary modifies her own modules and IPs to avoid detection using the detection system.
Specifically, we assume gray-box access to detection model $f$ via the system as follows:
\begin{itemize}
\item The adversary can input any Trojan net $e_\mathsf{t}$ to detection model $f$ multiple times and obtain the output of $f$ for the given net $e_\mathsf{t}$ (i.e., probability that $e_\mathsf{t}$ is a Trojan net).
\item The adversary cannot directly obtain the structure and parameters of detection model $f$.
\end{itemize}

The detection system provides not only the conclusive predicted result (i.e., normal or Trojan) but also the output of $f$ (i.e., probability that a given net is a Trojan net).
The trustworthiness of machine learning is considered, and existing policies~\cite{EC2019,OECD2019} have regularized the transparency of decision-making by machine learning.
To accomplish this objective, machine learning-based systems are required to present the reason for the model's decision, and the output of $f$ can be employed for such a purpose.
The gray-box access to $f$ must be reasonable enough.

\section{Proposed Method}
\label{sec:method}
In this section, we propose an HT detection method named \emph{R-HTDetector} that is robust in responding to gate modification attacks.
First, an overview of the proposed method is presented.
The proposed method consists of two parts: the gate modification attack and adversarial training.
At the end of this section, the effectiveness of the proposed method is shown via formal analysis.

\begin{figure*}[t]
  \centering
  \includegraphics[width=0.85\linewidth]{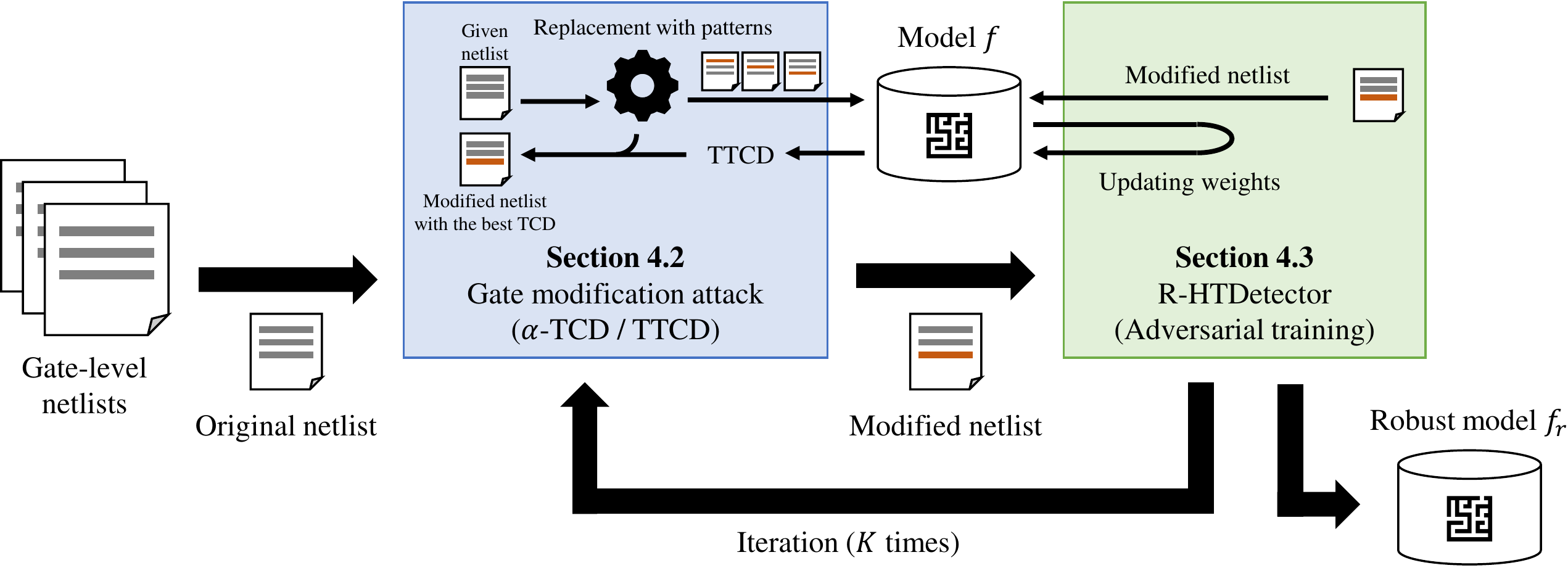}
  \caption{Overview of the proposed method. The proposed method consists of two parts: gate modification attack and adversarial training. In the gate modification attack, adversarial examples on gate-level netlists are generated. In adversarial training, the target model learns the generated adversarial examples, and a robust model is constructed.}
  \label{fig:rhtdetector-overview}
\end{figure*}

\subsection{Overview of the Proposed Method}

Fig.~\ref{fig:rhtdetector-overview} shows an overview of the proposed method.

The gate modification attack method is performed to generate adversarial examples of a given gate-level netlist.
In the gate modification attack method, several gates in the given gate-level netlist are replaced with another set of logically equivalent gates, and adversarial examples are generated.
To effectively generate adversarial examples, the metrics that evaluate the efficiency of the example are generalized as \emph{$\alpha$-TCD}.
The generated examples are provided to target model $f$, and $\alpha$-TCD values are obtained.
The example with the smallest $\alpha$-TCD is selected as the most effectively modified sample.
For adversarial training, another metric is introduced instead of $\alpha$-TCD.

Adversarial training in the proposed method is the framework in which the target model is enhanced by learning the samples generated by the gate modification attack.
The goal of the method is to construct a robust model against adversarial examples.
The new metric named \emph{target TCD~(TTCD)} is adopted to generate adversarial examples.
By replacing the small number of training datasets with adversarial examples, a new training dataset is generated.
The target model is trained using the generated training dataset, and a robust model is constructed.

\subsection{Generalized Gate Modification Attacks}
\label{sub:alpha_gm_attacks}
Gate modification attacks are aimed at concealing as many Trojan nets as possible.
The detection model $f$ classifies net $e \in E$ as a normal net when the probability $f(\xe)$ is less than 0.5.
Gate modification attacks try to reduce the probability $f(\xe)$ of Trojan nets and to conceal them from the detection method.
To realize this purpose, we define $\alpha$-TCD with respect to Trojan nets $\Et$ and detection model $f$ as follows:
\begin{align}
\TCDalpha(\Et, f) = - \frac{1}{|\Et|} \sum_{\et \in \Et} | \log f(\xet) |^\alpha,
\label{eq:alpha_tcd}
\end{align}
where $\alpha \geq 0$ is a positive parameter.
Equation~(\ref{eq:alpha_tcd}) can be derived from the cross-entropy of the model output $f(\xet)$ with respect to Trojan nets $\Et$.
When $\alpha$-TCD is small, the genuine Trojan nets have a low probability of being Trojan for model $f$.
Therefore, if adversaries modify their HT so that $\alpha$-TCD is reduced, they can easily achieve their purpose to conceal Trojan nets.
The modification should be logically equivalent to maintaining the functionality of the original HT (regarding how to modify every gate; the details are presented in Section~\ref{sub:setup}).
Consequently, adversaries conceal the Trojan nets that are logically equivalent to the original Trojan nets, i.e., the gate modification attack is accomplished.
Note that since $\alpha$-TCD is estimated from only model outputs, our attacks can be applied to any type of detection model constructed with machine learning.

$\alpha$-TCD can also be interpreted as the $L^p$ norm.
Since $L^p$ norms for $p =$ 1, 2, and $\infty$ are well-known, we especially treat the attacks at $\alpha =$ 1, 2, and $\infty$ in this paper.
When $\alpha = 1$ or $2$, $\alpha$-TCD shows how likely the whole set of nets in $E_t$ is Trojan.
As expressed in Equation~(\ref{eq:alpha_tcd}), $\alpha$-TCD is derived based on the whole set of Trojan nets.
Specifically, $\alpha$-TCD at $\alpha = 1$ and $2$ are expressed as follows:
\begin{align}
  \TCDone(\Et, f) & = - \frac{1}{|\Et|} \sum_{\et \in \Et} | \log f(\xet) |, \label{eq:alpha_tcd_one} \\
  \TCDtwo(\Et, f) & = - \frac{1}{|\Et|} \sum_{\et \in \Et} | \log f(\xet) |^2. \label{eq:alpha_tcd_two}
\end{align}
If $\alpha$-TCD at $\alpha = 1$ or $2$ is decreased, the whole set of Trojan nets in $E_t$ is likely to be misclassified as normal.
When $\alpha = \infty$, $\alpha$-TCD shows the probability of the net with the highest probability of being Trojan in $E_t$.
$\alpha$-TCD at $\alpha = \infty$ is expressed as follows:
%
\begin{align}
  \TCDmax(\Et, f) & = \max_{\et \in \Et} \log f(\xet) .  \label{eq:alpha_tcd_max}
\end{align}
The Trojan net with the largest cross-entropy $\log f(\xet)$ is recognized as the most likely Trojan net by a classifier.
If $\alpha$-TCD at $\alpha = \infty$ is decreased, the most likely Trojan net is misclassified as a normal net, and thus, the highest probability of being Trojan in $E_t$ is decreased.


Algorithm~\ref{alg:gen_aes} summarizes gate modification attacks with $\alpha$-TCD.
This algorithm is a generalized version of that described in Section~\ref{sub:gm_attacks}.
The gates that compose the HT in a given circuit are replaced with another set of gates based on the modification patterns $P$ at lines 4 and 5.
The $\alpha$-TCD values are calculated at line 6, and the modified circuit with the smallest $\alpha$-TCD value is stored in $G^{i}$ at the $i$-th modification.
The modification process is repeated $K$ times, and the modified circuit is generated at line 14.

\begin{figure}[t]
  \begin{algorithm}[H]
  \caption{Gate modification attacks with $\alpha$-TCD.}
  \label{alg:gen_aes}
  \begin{algorithmic}[1]
  \REQUIRE Detection model $f$, circuit $G = (V, E)$ including HT $\Gt = (\Vt, \Et)$, modification patterns $P$, number of modified gates $K$
  \ENSURE Adversarial circuit $\advG$
  \STATE $i \leftarrow 0$, $G^{(i)} \leftarrow G$, $\mathsf{best\_TCD} \leftarrow 0$
  \WHILE{$i < K$}
    \FORALL{$\vt \in \Vt$ in $G^{(i)}$ that has not yet been modified}
    \FORALL{$p \in P$ that can be applied to $v_t$}
      \STATE $G' \leftarrow$ Apply $p$ to $\vt$ and generate the modified circuit.
      \IF{$\TCDalpha(G', f) < \mathsf{best\_TCD}$}
      \STATE $G^{(i)} \leftarrow G'$
      \STATE $\mathsf{best\_TCD} \leftarrow \TCDalpha(G', f)$
      \ENDIF
    \ENDFOR
    \ENDFOR
    \STATE $i \leftarrow i+1$, $G^{(i)} \leftarrow G^{(i-1)}$
  \ENDWHILE
  \RETURN $\advG = G^{(i)}$
  \end{algorithmic}
  \end{algorithm}
\end{figure}

\subsection{R-HTDetector: Adversarial Training for HT Detection}
\label{sub:rhtdector}
Adversarial training is employed to construct a detection model that is robust to gate modification attacks.
As described in Section~\ref{sub:adversarial_training}, adversarial training incorporates adversarial examples into a training dataset.
However, it is not practical to generate adversarial examples with $\alpha$-TCD for any value of $\alpha$.
To effectively apply adversarial training with a small number of adversarial examples to HT detection, \emph{weak adversarial examples} are considered in this paper.
Adversarial examples for HT detection are generated such that multiple Trojan nets are misclassified.
Each adversarial example is not optimal for a single Trojan net.
Non-optimal adversarial examples have a smaller perturbation in the feature space than the optimal example.
As a result, the attack performance for a target net is decreased.
We define such non-optimal adversarial examples as weak adversarial examples.
Intuitively, adversarial training based on adversarial examples with high attack performance renders a classifier robust not only to these examples but also to weak adversarial examples that are not involved in the training (formal discussion is presented in Section~\ref{sub:theoretical_analysis}).
Therefore, we propose an adversarial training method that is based on adversarial examples with a higher attack performance than any adversarial examples generated with $\alpha$-TCD.

We develop gate modification attacks that generate an optimal adversarial example for each Trojan net.
We introduce a new metric referred to as the \emph{targeted TCD} (\emph{TTCD}).
TTCD is defined for each Trojan net $\et \in \Et$ and is expressed as:
\begin{align}
\TTCD(\et, f) = \ \log f(\xet).
\label{eq:ttcd}
\end{align}
Equation~(\ref{eq:ttcd}) corresponds to the probability that the target net $\et$ is a Trojan net.

Adversarial training for HT detection generates adversarial examples via a gate modification attack with TTCD.
The algorithm of the attack with TTCD is almost the same as Algorithm~\ref{alg:gen_aes}.
However, each adversarial example is not generated every circuit but every Trojan net $\et \in \Et$.
Typical adversarial training generates adversarial examples for all labels, whereas the proposed method generates adversarial examples for the Trojan label.
The identification of Trojan nets is more important than that of normal nets in HT detection~\cite{Hasegawa2017,9338932}.
Furthermore, the adversary can easily access the Trojan nets.
The adversary has a small motive to induce the misclassification of normal nets compared to Trojan nets.
Algorithm~\ref{alg:adversarial_training} summarizes the proposed adversarial training with TTCD, named \emph{R-HTDetector}, with respect to training dataset $\Dtrain$.
The training dataset is modified at the ratio of $l$.
The selected samples are modified by a gate modification attack with TTCD at line 6.
The mini-batch, including adversarial examples, is trained by model $f$ on line 9.
The update of the model weight is repeated $j$ times, and a robust model is constructed at line 12.

\begin{figure}[t]
    
  \begin{algorithm}[H]
    \caption{R-HTDetector: Adversarial training for HT detection.}
    \label{alg:adversarial_training}
    \begin{algorithmic}[1]
    \REQUIRE Training dataset $\Dtrain$, mini-batch size $m$, minimum Trojan samples in a mini-batch $m' (\leq m)$, ratio $l$ of Trojan samples to be modified, number of epochs $j$
    \ENSURE Robust detection model $\fr$
    \STATE $i \leftarrow 0$
    \STATE Initialize a detection model $f$.
    \WHILE{$i < j$}
      \STATE Sample a mini-batch $B$ of $m$ samples from $\Dtrain$.
      \IF{$B$ includes $m'$ Trojan samples}
      \STATE Generate adversarial examples from $l \cdot m'$ Trojan samples in $B$ by the gate modification attack with TTCD.
      \STATE Add all the adversarial examples to $B$.
      \ENDIF
      \STATE Update $f$ with mini-batch $B$.
      \STATE $i \leftarrow i + 1$
    \ENDWHILE
    \RETURN $\fr = f$
    \end{algorithmic}
  \end{algorithm}
  
\end{figure}

\subsection{Theoretical Analysis}
\label{sub:theoretical_analysis}
We formally discuss the robustness of our proposed method.
We define two types of adversarial examples: \emph{optimal} and \emph{weak} adversarial examples.

\begin{figure}[t]
  \centering
  \includegraphics[width=0.8\columnwidth]{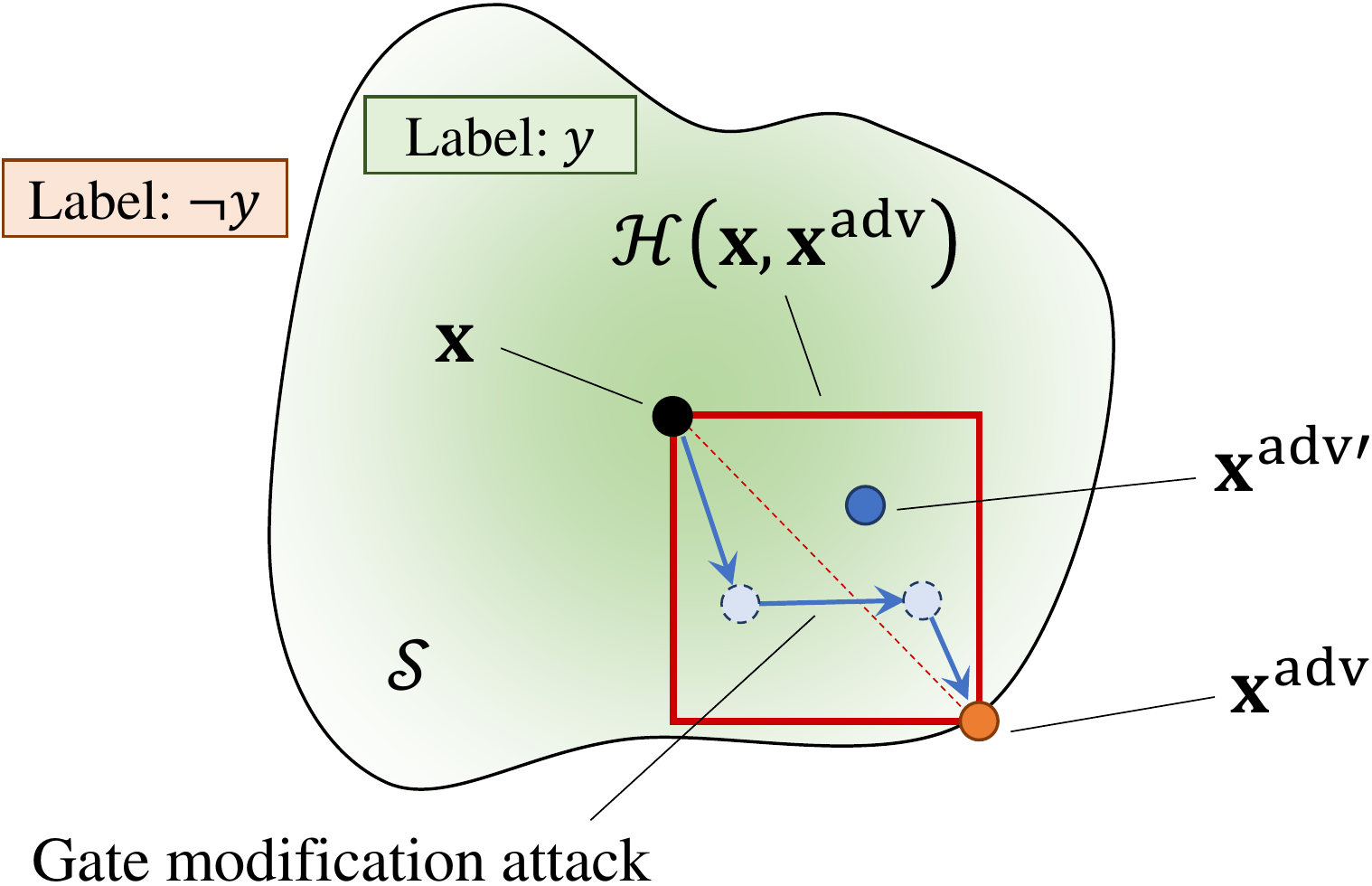}
  \caption{Conceptual illustration of adversarial training. The green shaded area is classified as a Trojan net by the target model.}
  \label{fig:adversarial-example-image}
\end{figure}

\begin{definition}[Optimal adversarial example]
Let $f$ be a classifier. Let $\epsilon$ be a small positive value. We call a solution of the following optimization problem the optimal adversarial example for a given feature vector $\mathbf{x} \in \bbR^{n}$:
\begin{align}
\min_{\mathbf{x}' \in \bbR^{n}} \log f(\mathbf{x}')\ \mathrm{s.t.}\ || \mathbf{x}' - \mathbf{x} ||_\infty < \epsilon. 
\end{align}
\end{definition}

\begin{definition}[Weak adversarial example]
\label{def:weak_adv}
Let $\mathbf{x}^{\mathsf{adv}} \in \bbR^n$ be the optimal adversarial example for a given feature vector $\mathbf{x} \in \bbR^n$.
Let $\mathbf{x}^{\mathsf{adv}'} (\not = \mathbf{x}^
\mathsf{adv}) \in \bbR^n$ be an adversarial example for the same feature vector $\mathbf{x}$.
If $\mathbf{x}^{\mathsf{adv}'}$ satisfies the following conditions, we say that $\mathbf{x}^{\mathsf{adv}'}$ is a weak adversarial example.
\begin{align}
\mathbf{x}^{\mathsf{adv}'} = \mathbf{x} + \boldsymbol{\gamma} (\mathbf{x}^\mathsf{adv} - \mathbf{x}), \mathrm{where\ } \boldsymbol{\gamma} \in [0, 1]^n. 
\end{align}

\end{definition}

Next, we consider a classifier that is robust to adversarial examples.
Let a continuous function $f$ be a classifier constructed with a training dataset $\Dtrain$.
Let $\mathbf{x}_1, \mathbf{x}_2 \in \bbR^{n}$ be samples with the same label $y$ in $\Dtrain$.
Let $\mathcal{H}(\mathbf{x}_1, \mathbf{x}_2)$ be an $n$-dimensional hypercube where $\mathbf{x}_1$ and $\mathbf{x}_2$ are vertices on the longest diagonal.
Let $\mathcal{S}$ be a space that $f$ forms for the label $y$.
We assume that if $\mathcal{H}$ does not include any samples with different labels from $y$ in $\Dtrain$ and if $f$ is well-trained on $\mathbf{x}_1$ and $\mathbf{x}_2$, then the following relation holds: $\mathcal{H} \subseteq \mathcal{S}$.

\begin{theorem}
\label{thr:robut_model}
Let $\mathbf{x}^\mathsf{adv} \in \bbR^{n}$ be the optimal adversarial example for a given feature vector $\mathbf{x} \in \bbR^{n}$ whose label is $y$.
Let $f$ be a classifier that is well-trained on $\mathbf{x}$ and $\mathbf{x}^\mathsf{adv}$.
If $f$ is a continuous function, $f$ correctly classifies any weak adversarial example $\mathbf{x}^{\mathsf{adv}'} \in \bbR^{n}$ for $\mathbf{x}$.
\end{theorem}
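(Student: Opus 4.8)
The plan is to reduce the claim to the geometric assumption stated just before the theorem — that a hypercube whose longest diagonal joins two same‑label, well‑trained points is contained in the region $\mathcal{S}$ that $f$ assigns to that label. The single new fact that makes the reduction go through is that \emph{every} weak adversarial example for $\mathbf{x}$ lies inside the hypercube $\mathcal{H}(\mathbf{x},\mathbf{x}^{\mathsf{adv}})$ spanned by $\mathbf{x}$ and the optimal adversarial example $\mathbf{x}^{\mathsf{adv}}$. Once that is established, the theorem follows by a single application of the assumption.

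First I would record that both $\mathbf{x}$ and $\mathbf{x}^{\mathsf{adv}}$ belong to $\Dtrain$ with label $y$: $\mathbf{x}$ does by hypothesis, and $\mathbf{x}^{\mathsf{adv}}$ does because adversarial training (Algorithm~\ref{alg:adversarial_training}) inserts each generated example into the training set carrying the true label of its seed, here $y$; moreover $f$ is, by the theorem's hypothesis, well‑trained on both of these points. Next I would prove the containment $\mathbf{x}^{\mathsf{adv}'}\in\mathcal{H}(\mathbf{x},\mathbf{x}^{\mathsf{adv}})$: by Definition~\ref{def:weak_adv} the $i$‑th coordinate of $\mathbf{x}^{\mathsf{adv}'}$ equals $(1-\gamma_i)x_i+\gamma_i x^{\mathsf{adv}}_i$ with $\gamma_i\in[0,1]$, a convex combination of the $i$‑th coordinates of the two endpoints, hence it lies in the interval $[\min(x_i,x^{\mathsf{adv}}_i),\max(x_i,x^{\mathsf{adv}}_i)]$; ranging over all coordinates, this says exactly that $\mathbf{x}^{\mathsf{adv}'}$ lies in the axis‑aligned box with $\mathbf{x}$ and $\mathbf{x}^{\mathsf{adv}}$ on the longest diagonal.

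With these two facts in hand I would invoke the pre‑theorem assumption for the pair $(\mathbf{x},\mathbf{x}^{\mathsf{adv}})$: $f$ is well‑trained on them, and $\mathcal{H}(\mathbf{x},\mathbf{x}^{\mathsf{adv}})$ contains no training sample whose label differs from $y$ (see the obstacle below), so $\mathcal{H}(\mathbf{x},\mathbf{x}^{\mathsf{adv}})\subseteq\mathcal{S}$. Combining with $\mathbf{x}^{\mathsf{adv}'}\in\mathcal{H}(\mathbf{x},\mathbf{x}^{\mathsf{adv}})$ yields $\mathbf{x}^{\mathsf{adv}'}\in\mathcal{S}$, i.e.\ $f$ assigns label $y$ to $\mathbf{x}^{\mathsf{adv}'}$, which is the desired conclusion. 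Continuity of $f$ enters only to make $\mathcal{S}$ a genuine (closed) region of $\bbR^{n}$ — e.g.\ $\mathcal{S}=\{\mathbf{z}:f(\mathbf{z})\ge 1/2\}$ for the Trojan label — so that the inclusion $\mathcal{H}\subseteq\mathcal{S}$ is well posed and the assumption is applicable.

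The step I expect to be the main obstacle is verifying that $\mathcal{H}(\mathbf{x},\mathbf{x}^{\mathsf{adv}})$ genuinely contains no opposite‑label training point. The box has $\ell_\infty$‑diameter strictly below $\epsilon$, since $\|\mathbf{x}^{\mathsf{adv}}-\mathbf{x}\|_\infty<\epsilon$ by the definition of the optimal adversarial example, so this reduces to arguing that $\epsilon$ is smaller than the separation between $\mathbf{x}$ and the nearest training sample of a different label — a property that is implicit in calling $\epsilon$ ``small'' and in $f$ being well trained on a clean, separated dataset. I would make this explicit, either as a standing assumption on $\epsilon$ or by noting that any opposite‑label point inside the $\epsilon$‑ball around $\mathbf{x}$ would already be inconsistent with $f$ classifying $\mathbf{x}$ correctly with margin. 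Apart from this point, the proof is essentially an unpacking of the definitions of optimal and weak adversarial examples followed by one use of the stated geometric assumption.
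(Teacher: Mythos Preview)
Your proposal is correct and follows essentially the same approach as the paper: both arguments use Definition~\ref{def:weak_adv} to place $\mathbf{x}^{\mathsf{adv}'}$ inside $\mathcal{H}(\mathbf{x},\mathbf{x}^{\mathsf{adv}})$, appeal to the smallness of $\|\mathbf{x}^{\mathsf{adv}}-\mathbf{x}\|_\infty$ to argue that $\mathcal{H}$ is free of opposite‑label training points, and then invoke the pre‑theorem assumption $\mathcal{H}\subseteq\mathcal{S}$. Your write‑up is considerably more explicit than the paper's --- in particular you spell out the coordinatewise convex‑combination argument and flag the ``no opposite‑label point'' condition as the genuine gap (the paper handles it only heuristically, saying the probability of such a point is ``low'') --- but the underlying route is the same.
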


\begin{proof}
By the definition of the adversarial example~\cite{Goodfellow2015}, the distance between $\mathbf{x}^\mathsf{adv}$ and $\mathbf{x}$, $\| \mathbf{x}^\mathsf{adv} - \mathbf{x}\|_\infty$ is small, and thus the probability that the space $\mathcal{H}(\mathbf{x}, \mathbf{x}^\mathsf{adv})$ includes samples with different labels from $y$ is low.
Hence, by Definition~\ref{def:weak_adv} and the above assumption, the theorem holds.
\end{proof}

Fig.~\ref{fig:adversarial-example-image} shows a conceptual illustration of Theorem~\ref{thr:robut_model}.
We assume that TTCD with respect to a Trojan net $\et \in \Et$ is minimized when the value of each element in the feature vector $\xet \in \bbR^{n}$ that represents $\et$ changes the most.
According to the description of each feature~\cite{Hasegawa2017}, when the closest gates to $\et$ are modified, the value of each element in $\xet$ changes the most.
Since the gate modification attack with TTCD selects gates to be modified without any constraints, it can select the closest gates.
Thus, the attack with TTCD produces the optimal adversarial example for $\et$.
On the basis of this assumption, we have the following lemma.

\begin{lemma}
\label{lem:weak_adv_net}
Suppose that an adversarial example generated by the gate modification attack with TTCD for a given Trojan net $\et \in \Et$ is the optimal adversarial example $\xet^\mathsf{adv} = (x^{\mathsf{adv}}_{\et,1}, \ldots, x^{\mathsf{adv}}_{\et,n})\in \bbR^n$ for the net $\et$.
Then, any adversarial example generated by the gate modification attack with $\alpha$-TCD becomes a weak adversarial example $\xet^{\mathsf{adv}'} = (x^{\mathsf{adv}'}_{\et,1}, \ldots, x^{\mathsf{adv}'}_{\et,n}) \in \bbR^n$.
\end{lemma}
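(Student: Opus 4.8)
The plan is to show that, for each Trojan net $\et \in \Et$, the feature vector $\xet^{\mathsf{adv}'}$ produced by the $\alpha$-TCD attack of Algorithm~\ref{alg:gen_aes} is obtained from $\xet$ by displacing each coordinate only partway toward the corresponding coordinate of the optimal adversarial example $\xet^\mathsf{adv}$ — which is exactly the condition in Definition~\ref{def:weak_adv}. First I would fix the semantics inherited from the paragraph preceding the lemma: the optimal adversarial example $\xet^\mathsf{adv}$ for $\et$ is the one realized by the TTCD attack, which spends its whole budget of $K$ modifications on the gates closest to $\et$, so that each coordinate $x^{\mathsf{adv}}_{\et,k}$ attains the extreme value of feature $k$ reachable by logically equivalent modifications in the concealment direction. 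The key consequence I would record is monotonicity: modifying additional or nearer gates around $\et$ moves feature $k$ from $x_{\et,k}$ steadily toward $x^{\mathsf{adv}}_{\et,k}$ and never past it.

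Next I would analyze the $\alpha$-TCD attack. It runs the same greedy gate-replacement loop, but (i) its objective is the aggregate $\TCDalpha(\Et, f)$ over all of $\Et$, and (ii) its $K$ modifications must be shared among all Trojan nets rather than devoted to $\et$. Hence, for a fixed $\et$, the modifications that actually influence $\xet$ form a subset of — and are in any event no more aggressive than — those a dedicated TTCD attack on $\et$ would apply: fewer gates near $\et$ are touched, and the patterns chosen are further constrained by the need to conceal the other Trojan nets. By the monotonicity of the previous step, every coordinate of the resulting vector then satisfies $x^{\mathsf{adv}'}_{\et,k}$ lying between $x_{\et,k}$ and $x^{\mathsf{adv}}_{\et,k}$. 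Setting $\gamma_k = (x^{\mathsf{adv}'}_{\et,k} - x_{\et,k})/(x^{\mathsf{adv}}_{\et,k} - x_{\et,k})$ when $x^{\mathsf{adv}}_{\et,k} \neq x_{\et,k}$, and $\gamma_k$ to any value in $[0,1]$ otherwise, gives $\boldsymbol{\gamma} \in [0,1]^n$ with $\xet^{\mathsf{adv}'} = \xet + \boldsymbol{\gamma}(\xet^\mathsf{adv} - \xet)$ componentwise, matching Definition~\ref{def:weak_adv}; unless $\xet^{\mathsf{adv}'}$ coincides with $\xet^\mathsf{adv}$, it is a weak adversarial example, and feeding this into Theorem~\ref{thr:robut_model} is what closes the robustness argument.

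The main obstacle is the coordinate-wise ``betweenness'' claim in the middle step, i.e.\ justifying rigorously that (a) each structural feature in Table~\ref{tb:11_features} moves monotonically toward a single extreme as gates are modified, (b) that extreme is precisely the value the optimal adversarial example attains, and (c) the budget-sharing $\alpha$-TCD attack never overshoots it for any individual net. All three depend on the detailed semantics of the features and of the logically equivalent modification patterns, so the cleanest route is to subsume them into the standing assumption the paper already stated just before the lemma (the ``closest gates change each element the most'' hypothesis) and reduce the rest of the proof to bookkeeping on the indices $k$ and on the shared budget $K$.
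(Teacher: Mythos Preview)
Your proposal is correct and follows essentially the same approach as the paper: establish coordinate-wise monotonicity of the feature changes induced by gate modification, argue that the $\alpha$-TCD attack's additional constraints (what you call budget sharing) prevent it from matching the TTCD attack's extremal displacement, and conclude the componentwise betweenness that Definition~\ref{def:weak_adv} requires. The paper's proof is even terser than yours --- it grounds the monotonicity directly in the modification patterns of Fig.~\ref{fig:modification_patterns} rather than in the standing assumption, and it skips the explicit construction of $\boldsymbol{\gamma}$ --- but the logical skeleton is the same.
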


\begin{proof}
By the definitions of the modification patterns shown in Fig.~\ref{fig:modification_patterns}, gate modification attacks change the value of each element in $\xet$ in only a certain direction.
Since the gate modification attack with $\alpha$-TCD has specific constraints, the attack may not be able to modify the closest gates to the net $\et$.
Hence we have that if $x_{e_\mathsf{t}, i} \leq x_{e_\mathsf{t}, i}^{\mathsf{adv}}$, then $x_{e_\mathsf{t}, i} \leq x_{e_\mathsf{t}, i}^{\mathsf{adv}'} \leq x_{e_\mathsf{t}, i}^{\mathsf{adv}}$, and otherwise $x_{e_\mathsf{t}, i} \geq x_{e_\mathsf{t}, i}^{\mathsf{adv}'} \geq x_{e_\mathsf{t}, i}^{\mathsf{adv}}$, for any $i \in [n]$.
Therefore, by Definition~\ref{def:weak_adv}, the adversarial example generated with $\alpha$-TCD becomes a weak adversarial example.
\end{proof}

We obtain the following theorem to establish the robustness of R-HTDetector.
\begin{theorem}
\label{thr:rhtdetector}
Detection model $\fr$ constructed via adversarial training with TTCD identifies any adversarial example generated by the gate modification attack with $\alpha$-TCD.
\end{theorem}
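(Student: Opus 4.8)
The plan is to obtain Theorem~\ref{thr:rhtdetector} by chaining Lemma~\ref{lem:weak_adv_net} and Theorem~\ref{thr:robut_model}, arguing one Trojan net at a time. First I would fix an arbitrary adversarial circuit $\advG$ produced by Algorithm~\ref{alg:gen_aes} with $\alpha$-TCD against $\fr$, and an arbitrary Trojan net $\et \in \Et$; write $\xet^{\mathsf{adv}'} \in \bbR^n$ for the feature vector that $\et$ acquires in $\advG$. By the modelling assumption stated just before Lemma~\ref{lem:weak_adv_net} --- the TTCD attack is unconstrained, so it may modify the gates closest to $\et$ and thereby push every coordinate of $\xet$ to its extreme value --- the adversarial example that the gate modification attack with TTCD would generate for $\et$ is the \emph{optimal} adversarial example $\xet^{\mathsf{adv}}$. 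Lemma~\ref{lem:weak_adv_net} then shows that $\xet^{\mathsf{adv}'}$ is coordinatewise an interpolation between $\xet$ and $\xet^{\mathsf{adv}}$, i.e. a weak adversarial example for $\xet$ in the sense of Definition~\ref{def:weak_adv}.

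Next I would use the construction of R-HTDetector (Algorithm~\ref{alg:adversarial_training}): during adversarial training, the gate modification attack with TTCD is applied to Trojan samples, so the optimal adversarial example $\xet^{\mathsf{adv}}$ is injected into the mini-batches together with the genuine sample $\xet$, and $\fr$ is trained on both. Thus $\fr$ is well-trained on the pair $\xet$, $\xet^{\mathsf{adv}}$ in the sense required by Theorem~\ref{thr:robut_model}, and $\fr$, being a neural network, is a continuous function. Theorem~\ref{thr:robut_model} therefore applies and guarantees that $\fr$ correctly classifies the weak adversarial example $\xet^{\mathsf{adv}'}$, i.e. it still labels $\et$ as a Trojan net. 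Since $\et$ and $\advG$ were arbitrary, $\fr$ identifies every adversarial example generated by the gate modification attack with $\alpha$-TCD, which is the assertion.

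The step I expect to carry the real weight is the reduction from the circuit-level $\alpha$-TCD attack to a per-net weak adversarial example, since the $\alpha$-TCD attack perturbs the whole netlist at once rather than targeting a single net. This is precisely the role of Lemma~\ref{lem:weak_adv_net}: its monotone-direction observation --- each logically equivalent modification pattern moves every feature coordinate in only one direction, and the constrained $\alpha$-TCD attack can move it no further than the unconstrained TTCD attack --- forces the resulting vector into the required coordinatewise interval. It is worth flagging in the write-up that the argument rests on the stated modelling assumptions: that the TTCD attack attains the per-net optimum, that $\fr$ is ``well-trained'' on every Trojan net and its optimal adversarial example, and the hypercube-containment hypothesis preceding Theorem~\ref{thr:robut_model}; granting these, the short chaining above closes the proof.
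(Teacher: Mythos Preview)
Your proposal is correct and follows exactly the paper's approach: the paper's own proof is the single sentence ``By Lemma~\ref{lem:weak_adv_net} and Theorem~\ref{thr:robut_model}, the theorem holds,'' and you have simply unpacked this chaining per Trojan net while making explicit the modelling assumptions (TTCD attains the per-net optimum, $\fr$ is well-trained on each pair, continuity, and the hypercube hypothesis) on which both Lemma~\ref{lem:weak_adv_net} and Theorem~\ref{thr:robut_model} rely. Your caveat about the circuit-level-to-per-net reduction being the load-bearing step is apt and matches the role Lemma~\ref{lem:weak_adv_net} plays in the paper.
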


\begin{proof}
By Lemma~\ref{lem:weak_adv_net} and Theorem~\ref{thr:robut_model}, the theorem holds.
\end{proof}

Theorem \ref{thr:rhtdetector} means that R-HTDetector overcomes gate modification attacks with $\alpha$-TCD for any value of $\alpha$.

\section{Evaluation}
\label{sec:evaluation}
In this section, the gate modification attacks with $\alpha$-TCD and R-HTDetector against these attacks are evaluated.

\subsection{Experimental Setup}
\label{sub:setup}

\begin{table}[t]
    \centering
    \caption{Benchmarks in our experiments.}
    \label{tb:benchmark_list}
    \scalebox{0.95}{
    \begin{tabular}{c|rr||c|rr} \hline
        \multicolumn{3}{c||}{Trust-HUB benchmarks} & \multicolumn{3}{c}{TRIT-TC benchmarks} \\ \hline
        Benchmark & Normal & Trojan & Benchmark & Normal & Trojan \\ \hline
        RS232-T1000 & 283 & 36 & c2670-T000 & 1011 & 4 \\ 
        RS232-T1100 & 284 & 36 & c2670-T001 & 1011 & 6 \\ 
        RS232-T1200 & 289 & 34 & c2670-T002 & 1011 & 4 \\ 
        RS232-T1300 & 287 & 29 & c3540-T000 & 1185 & 5 \\ 
        RS232-T1400 & 273 & 45 & c3540-T001 & 1185 & 6 \\ 
        RS232-T1500 & 283 & 39 & c3540-T002 & 1185 & 6 \\ 
        RS232-T1600 & 292 & 29 & c5315-T000 & 2486 & 8 \\ 
        s15850-T100 & 2419 & 27 & c5315-T001 & 2486 & 9 \\ 
        s35932-T100 & 6407 & 15 & c5315-T002 & 2486 & 5 \\ 
        s35932-T200 & 6405 & 12 & s1423-T000 & 565 & 4 \\ 
        s35932-T300 & 6405 & 37 & s1423-T001 & 565 & 6 \\ 
        s38417-T100 & 5798 & 12 & s1423-T002 & 565 & 8 \\ 
        s38417-T200 & 5798 & 15 & s13207-T000 & 2800 & 5 \\ 
        s38417-T300 & 5801 & 44 & s13207-T001 & 2800 & 6 \\ 
        &  &  & s13207-T002 & 2800 & 4 \\ \hline
    \end{tabular}
    }
\end{table}


In the experiments, we use two benchmark sets: the Trust-HUB benchmark, including 14 netlists composed of 41\,024 normal nets and 410 Trojan nets in total, and the TRIT-TC benchmark, including 15 netlists composed of 24\,141 normal nets and 86 Trojan nets in total~\cite{Salmani2013,shakya2017benchmarking,trusthub}, as shown in Table~\ref{tb:benchmark_list}.
Each netlist contains multiple normal and Trojan nets, and Trojan nets are identified by the comments in a netlist source code.
In the experiments, the 51 features listed in Table~\ref{tb:11_features} are extracted for each net in a netlist to identify Trojan nets.

We construct a neural network with three middle layers for the detection model on the basis of \cite{Hasegawa2017}.
Specifically, the number of units in each middle layer is 200, 100, or 50.
The activation function is a sigmoid function, and the optimizer is Adam.

\begin{figure}[t]
    \centering
    \includegraphics[width=1.0\columnwidth]{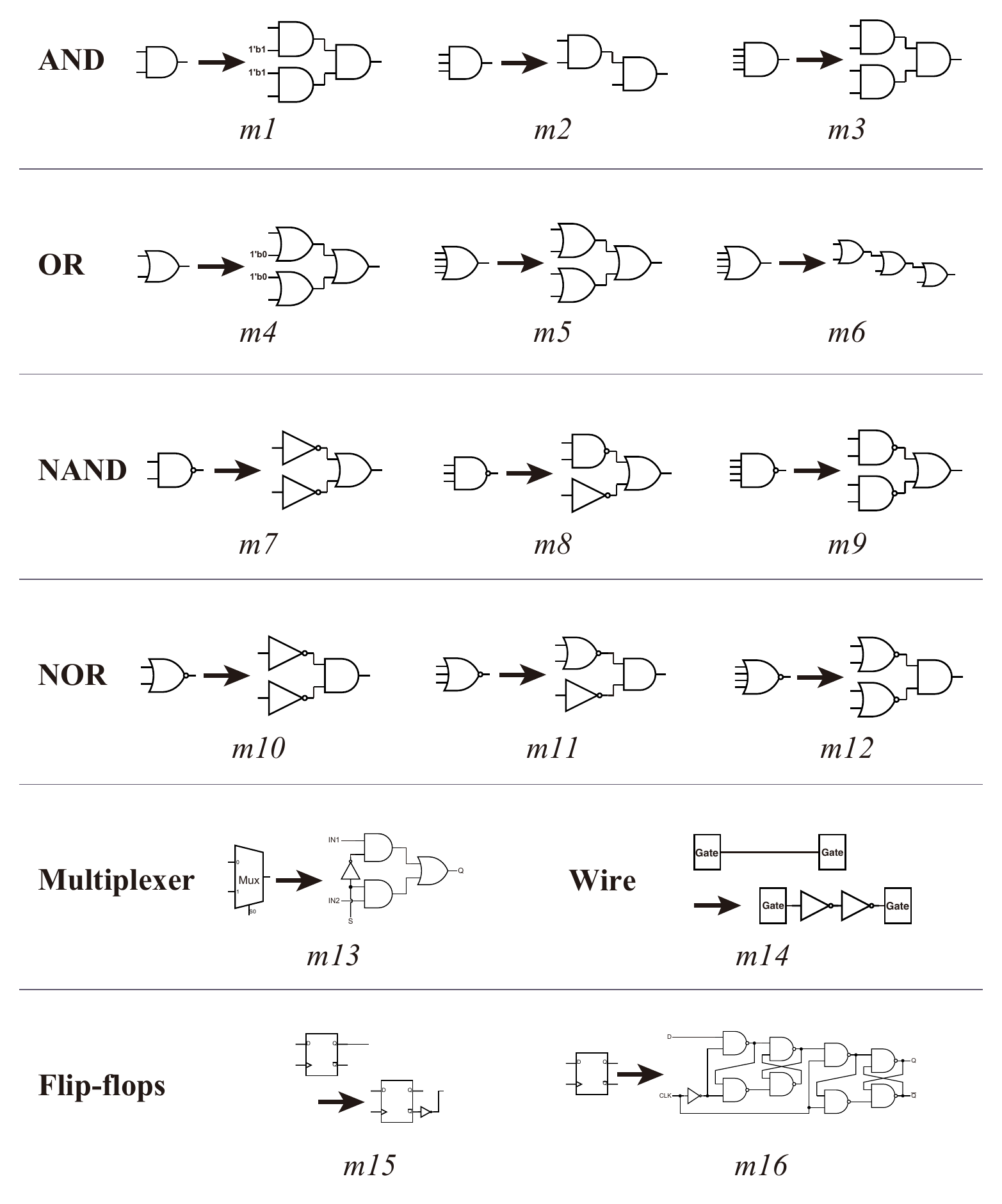}
    \caption{Modification patterns. \textit{m1}--\textit{m12} replace a logic gate with multiple logic gates. \textit{m13} replaces a multiplexer with a combinatorial circuit. \textit{m14} replaces a wire with two inverters. \textit{m15} and \textit{m16} replace a flip-flop with another circuit. Notably, we allow \textit{m16} because this modification is only applied to an HT in which no strict circuit-operation conditions are required.}
    \label{fig:modification_patterns}
\end{figure}

We use the modification patterns illustrated in Fig.~\ref{fig:modification_patterns}, which are designed based on the gates that compose HTs in the Trust-HUB benchmark.
Other modification patterns, such as the conversion according to De Morgan's laws and adding NOT gates, can also be considered, but we focus on the representative patterns here.

We construct two detection models with and without adversarial training.
We refer to the model constructed with (resp. without) adversarial training as R-HTDetector (resp. normal model).
For R-HTDetector (resp. normal model), we set the epoch size to 10 (resp. 50) and the mini-batch size to $m=16$ (resp. $m=2$).
In the initialization of R-HTDetector, we train the model without adversarial training over one epoch.
We generate adversarial examples for 10\% of the Trojan nets in a mini-batch, i.e., we set $l=0.1$ in the adversarial training.

In the evaluation of both detection models, we assume a scenario in which an adversary generates adversarial examples with respect to the normal model.
Gate modification attacks are performed based on $\alpha$-TCD at $\alpha = $ 1, 2, or $\infty$.
We evaluate the performance of both models with TPRs and TNRs defined in Section~\ref{sec:preliminaries}.
When evaluating the performance for a target netlist, we construct the detection model with a training dataset excluding the target netlist (also known as leave-one-out cross-validation).
For instance, when we evaluate the performance of RS232-T1000 using the Trust-HUB benchmark netlists, the detection model is trained with the remaining 13 benchmarks, excluding RS232-T1000.
The trained model is used to classify each net in RS232-T1000.
The Trojan nets are oversampled to balance the training dataset.

\subsection{Experimental Results}
\label{sub:results}

\begin{figure*}[t]
    \centering
    \includegraphics[width=0.7\linewidth]{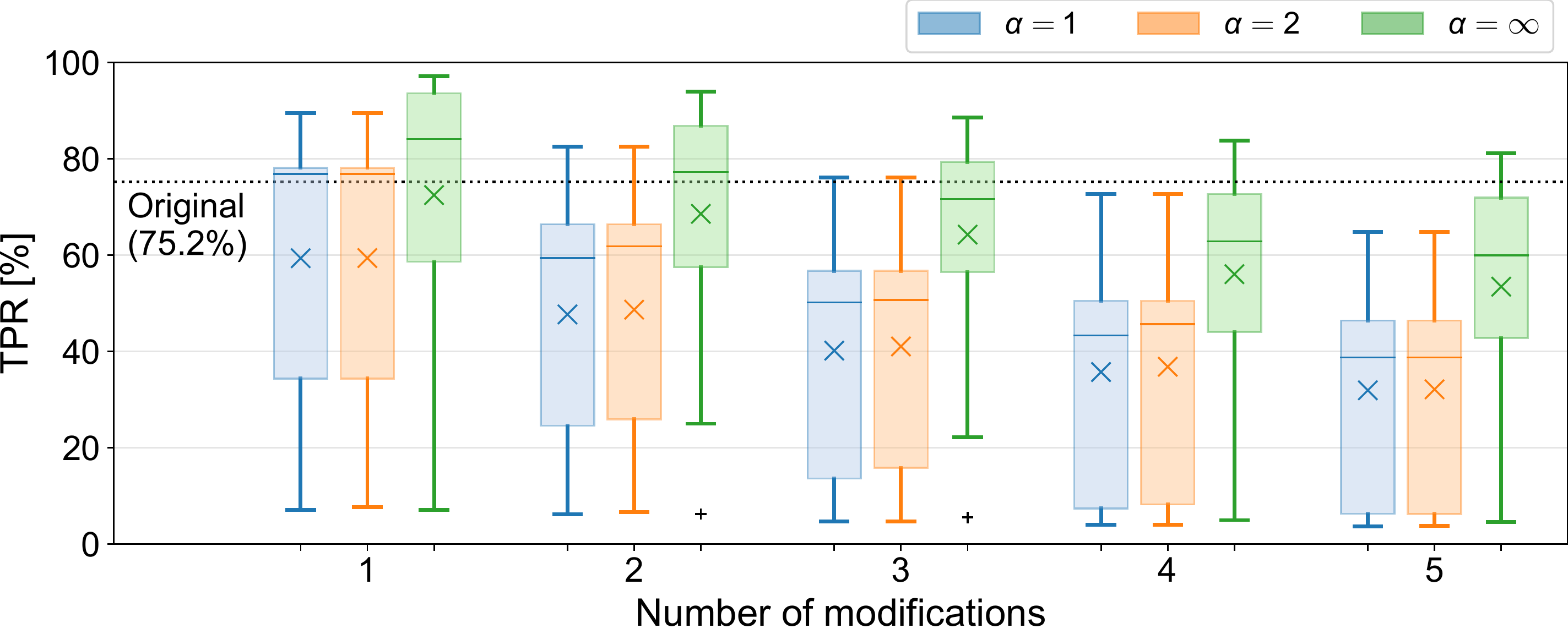}
    \caption{Experimental results on gate modification attacks using the Trust-HUB benchmark, including 14 netlists composed of 41\,024 normal nets and 410 Trojan nets in total.
    The x-axis represents the number of modifications, and the y-axis represents the TPR.
    The horizontal dotted line represents the average TPR for the original netlists. The horizontal line in a boxplot represents the median TPR. The `$\times$' plot denotes an average TPR, and the `+' plot denotes an outlier.}
    \label{fig:attack-results}
\end{figure*}

\begin{table*}[t]
    \centering
    \begin{minipage}{0.96\linewidth}
    \centering
    \caption{Comparison of detection performance between R-HTDetector and normal models (Trust-HUB benchmark).}
    \label{tb:res-trusthub}
    \scalebox{1.00}{
    \begin{tabular}{c||cc|cc||cc|cc|cc} \hline
         & \multicolumn{4}{c||}{Original samples} & \multicolumn{6}{c}{Gate modification-attacked samples (at the fifth modifications)} \\ \cline{2-11}
         & \multicolumn{2}{c|}{TPR} & \multicolumn{2}{c||}{TNR} & \multicolumn{2}{c|}{$\alpha=1$ (TPR)} & \multicolumn{2}{c|}{$\alpha=2$ (TPR)} & \multicolumn{2}{c}{$\alpha=\infty$ (TPR)} \\ \cline{2-11}
        Benchmarks & ~Normal~ & ~R-HTD~ & ~Normal~ & ~R-HTD~ & ~Normal~ & ~R-HTD~ & ~Normal~ & ~R-HTD~ & ~Normal~ & ~R-HTD~ \\ \hline
        RS232-T1000 & 100.0\% & 100.0\% & 98.2\% & 94.3\% & 58.7\% & 100.0\% & 63.0\% & 100.0\% & 78.3\% & 100.0\% \\ 
        RS232-T1100 & 100.0\% & 100.0\% & 96.5\% & 93.3\% & 58.7\% & 100.0\% & 58.7\% & 100.0\% & 60.9\% & 100.0\% \\ 
        RS232-T1200 & 97.1\% & 97.1\% & 98.6\% & 96.2\% & 50.0\% & 90.9\% & 50.0\% & 90.9\% & 54.5\% & 93.2\% \\ 
        RS232-T1300 & 100.0\% & 100.0\% & 98.3\% & 94.8\% & 41.0\% & 100.0\% & 41.0\% & 100.0\% & 35.9\% & 100.0\% \\ 
        RS232-T1400 & 100.0\% & 100.0\% & 99.6\% & 98.2\% & 45.5\% & 100.0\% & 54.5\% & 100.0\% & 54.5\% & 100.0\% \\ 
        RS232-T1500 & 97.4\% & 100.0\% & 98.9\% & 94.3\% & 57.1\% & 95.9\% & 53.1\% & 95.9\% & 77.6\% & 98.0\% \\ 
        RS232-T1600 & 96.6\% & 96.6\% & 98.3\% & 92.1\% & 64.1\% & 97.4\% & 74.4\% & 97.4\% & 61.5\% & 97.4\% \\ 
        s15850-T100 & 48.1\% & 74.1\% & 96.0\% & 93.3\% & 24.3\% & 64.9\% & 24.3\% & 64.9\% & 39.5\% & 60.5\% \\ 
        s35932-T100 & 60.0\% & 80.0\% & 71.3\% & 69.3\% & 34.6\% & 46.2\% & 37.5\% & 54.2\% & 68.0\% & 80.0\% \\ 
        s35932-T200 & 8.3\% & 8.3\% & 100.0\% & 99.9\% & 4.5\% & 9.1\% & 4.5\% & 13.6\% & 4.5\% & 36.4\% \\ 
        s35932-T300 & 73.0\% & 83.8\% & 99.5\% & 99.7\% & 51.0\% & 71.4\% & 51.0\% & 71.4\% & 82.0\% & 88.0\% \\ 
        s38417-T100 & 50.0\% & 66.7\% & 99.9\% & 99.9\% & 4.5\% & 68.2\% & 18.2\% & 72.7\% & 4.5\% & 77.3\% \\ 
        s38417-T200 & 40.0\% & 73.3\% & 100.0\% & 98.7\% & 28.0\% & 76.0\% & 28.0\% & 76.0\% & 4.0\% & 84.0\% \\ 
        s38417-T300 & 81.8\% & 88.6\% & 100.0\% & 99.9\% & 68.5\% & 75.9\% & 68.5\% & 75.9\% & 77.4\% & 83.0\% \\ \hline
        Average & 75.2\% & 83.5\% & 96.8\% & 94.6\% & 42.2\% & 78.3\% & 44.8\% & 79.5\% & 50.2\% & 85.6\% \\ \hline
    \end{tabular}
    }
    \begin{flushleft}
        \footnotesize{* R-HTD: R-HTDetector}
    \end{flushleft}
    \end{minipage}
\end{table*}

The Trust-HUB benchmark is employed for the initial evaluation.
In the evaluation, a gate modification attack is launched on the dataset.
Next, adversarial training is performed to enhance the robustness of the detection model against gate modification-attacked samples.

\smallskip
\noindent
\textbf{Gate modification attack.}
We evaluate the performance of the generalized gate modification attacks described in Section~\ref{sub:alpha_gm_attacks}.
Fig.~\ref{fig:attack-results} shows the results of the attacks with $\alpha$-TCD.
The gate modification-attacked samples are generated and classified using the normal model.
It can be seen that an increase in the number of modifications decreases the average TPRs.
The TPRs at $\alpha=1$ and $2$ change in a similar way, i.e., they decrease with 1--3 modifications and slightly decrease afterward.
In contrast, the average and median TPRs at $\alpha=\infty$ continuously decrease.
Since the TPR is most decreased at five modifications, we set $K=5$ for the Trust-HUB benchmark.

Table~\ref{tb:res-trusthub} shows the TPR at the fifth modification.
When evaluating the original samples, we use the normal model or R-HTDetector.
When evaluating the gate modification-attacked samples, we generate the gate modification-attacked samples using the normal model (resp. R-HTDetector) and then classify the generated samples with the same model.
As shown in the `Average' row in Table~\ref{tb:res-trusthub}, the maximum TPR for the gate modification-attacked samples is 50.2\%.
Compared to the results on the original samples, adversarial examples are significantly decreased in terms of the TPR.
Based on the results, we argue that adversarial examples effectively distort HT detection based on machine learning.

\begin{figure}[t]
  \centering
  \includegraphics[width=1.\columnwidth]{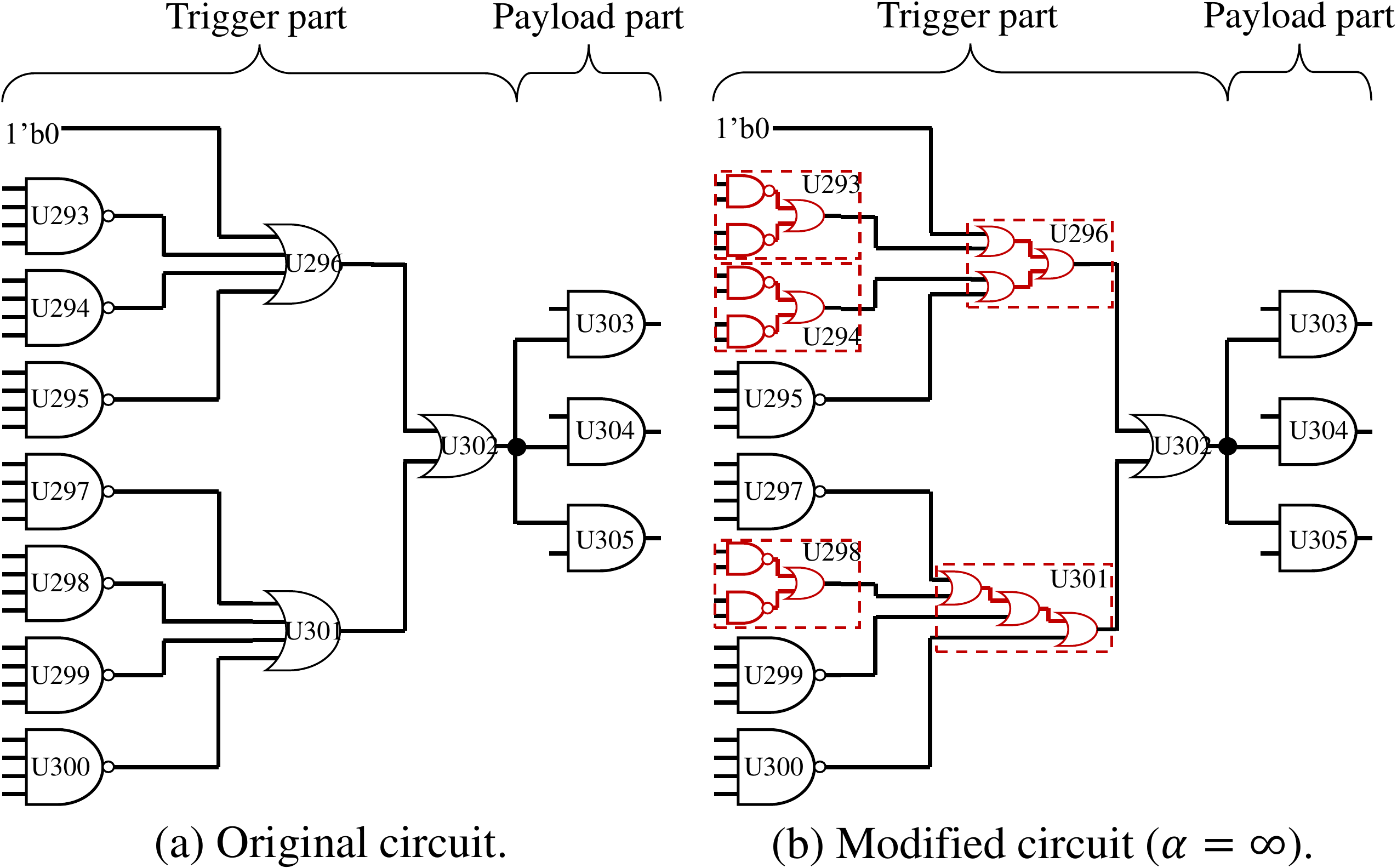}
  \caption{Gate modification attack on RS232-T1000 when $\alpha = \infty$.
  Five gates are replaced with logically equivalent circuits (shown in red).
  }
  \label{fig:RS232-T1000-modification-alpha_inf}
\end{figure}

Fig.~\ref{fig:RS232-T1000-modification-alpha_inf} depicts an example of a gate modification attack.
In the figure, a gate modification attack is launched on RS232-T1000 with $\alpha=\infty$.
Since we set $K=5$, five gates are replaced with logically equivalent circuits.
As shown in the figure, the gates in the trigger part are replaced because the trigger circuits tend to have specific features to HTs to determine rare conditions.
For the features in Table~\ref{tb:11_features}, the number of logic-gate fanins~(\#1--5) strongly correspond to the features.
These feature values are perturbed by increasing the logic levels in the trigger part, which causes misclassification.

\begin{figure*}[t]
  \centering
  \includegraphics[width=0.85\linewidth]{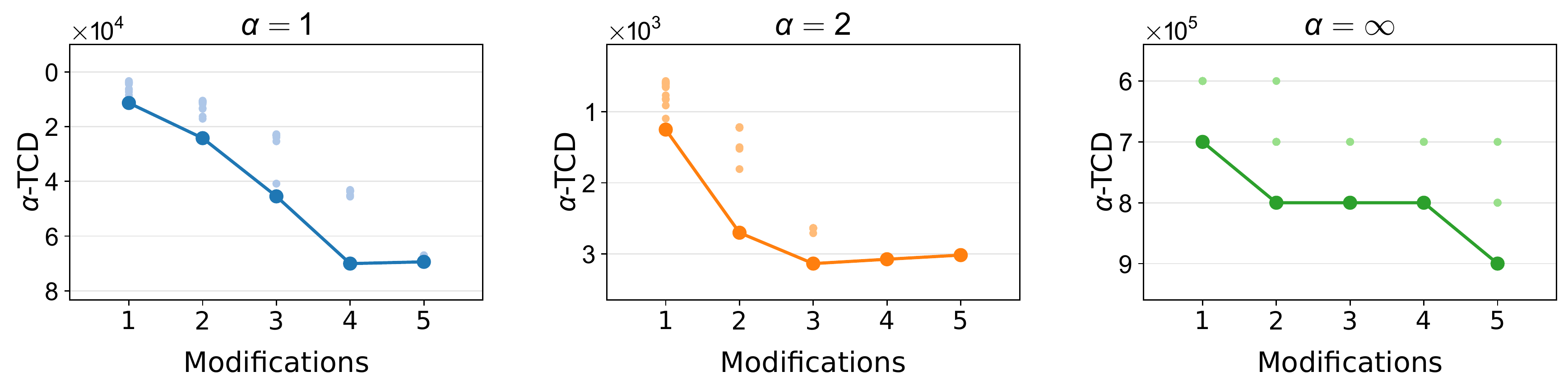}
  \caption{TCD values when generating the gate modification-attacked samples with the normal model on RS232-T1100.
  The x-axis represents the number of modifications, and the y-axis represents the $\alpha$-TCD values.
  The scale of the y-axis is adjusted for each $\alpha$ to determine the trend of $\alpha$-TCD values.
  The plots show the $\alpha$-TCD values for all samples generated at each modification.
  The line plot shows the minimum $\alpha$-TCD value.}
  \label{fig:loss-RS232-T1000}
\end{figure*}
Fig.~\ref{fig:loss-RS232-T1000} shows the $\alpha$-TCD values during a gate modification attack.
When $\alpha = 1$ or $2$, the $\alpha$-TCD value gradually decreases as the number of modifications is increased because the value is minimized by considering the whole Trojan nets using the $L^1$ or $L^2$ norm.
The $\alpha$-TCD values converge within the five modifications.
When $\alpha = \infty$, the $\alpha$-TCD value slightly changes and gradually decreases.
Since the value at $\alpha = \infty$ considers the most likely Trojan net, the optimization is performed from a local perspective, resulting in the slight change.

For any $\alpha$, the gate-modification attacks successfully degrade the classification performance of the normal model.
As mentioned in Section~\ref{sub:alpha_gm_attacks}, it is expected that many Trojan nets can be concealed by increasing the number of modifications.
In this sense, the modification attack successfully works on RS232-T1000.

\smallskip
\noindent
\textbf{Adversarial training.}
Next, we evaluate the robustness of R-HTDetector.
We perform adversarial training with TTCD and classify the original and gate modification-attacked samples with $\alpha$-TCD.
As shown in Table~\ref{tb:res-trusthub}, R-HTDetector gains an average TPR of 83.5\% for the original samples, which outperforms the normal model.
However, the average TNR for the original samples slightly decreases to 94.6\% from the normal model because adversarial training itself expands the classification area for the Trojan nets, and R-HTDetector does not incorporate adversarial examples for normal nets, as mentioned in Section~\ref{sub:rhtdector}.
Nevertheless, R-HTDetector still maintains a 94.6\% average TNR, which is high enough, and thus, R-HTDetector has satisfactory detection performance for the original samples.

Table \ref{tb:res-trusthub} also indicates that the TPRs of the normal model for gate modification-attacked samples are significantly decreased compared to the TPR for the original samples.
On the other hand, R-HTDetector achieves more than 78\% of the TPRs, which successfully detects adversarial examples.
From these results, the adversarial training with TTCD can overcome any gate modification attack with $\alpha$-TCD. 

\subsection{Scaling to Other Design Datasets}
\label{sub:trit-benchmarks}

To evaluate the proposed method using other circuit design, we use the TRIT-TC benchmark, as listed in Table~\ref{tb:benchmark_list}.
We choose the five designs and three Trojan-inserted circuits (-T000, -T001, and -T002) for each design; 15 designs are selected.
It should be noted that the beginning of the circuit name (`c' or `s') represents the type of normal circuit (combinatorial or sequential, respectively) and that trigger circuits of the TRIT-TC benchmark are combinatorial circuits.

This benchmark is different from the Trust-HUB benchmark for the following points:
\begin{enumerate}
    \item The original designs (c2670, c3540, c5315, s1423, and s13207) do not appear in the Trust-HUB benchmark.
    \item The HTs are automatically generated by a tool~\cite{conf/date/CruzHMB18}.
    \item The cell library is different; for example, a five-input OR gate is used in the TRIT-TC benchmark netlists whereas only less than five-input gates are used in the Trust-HUB benchmark netlists in Table~\ref{tb:benchmark_list}.
\end{enumerate}
Because of the difference in the cell library, some logic gates cannot be replaced based on the modification patterns in Fig.~\ref{fig:modification_patterns}.
Using the TRIT-TC benchmark netlists, we confirm that our proposed method successfully makes a classifier robust even when we learn another set of circuits.
Most of the experimental setups are the same as those described in Section~\ref{sub:setup}.
Due to the severe imbalance between normal nets and Trojan nets in the TRIT-TC benchmark, we set a weight value to the loss function calculated by the ratio of Trojan nets to the total number of nets.
In addition, we train the dataset with 15 epochs after the five-epoch initialization step without adversarial training because the TRIT-TC dataset is hard to train due to the imbalanced class distribution.

\begin{figure*}[t]
    \centering
    \includegraphics[width=0.7\linewidth]{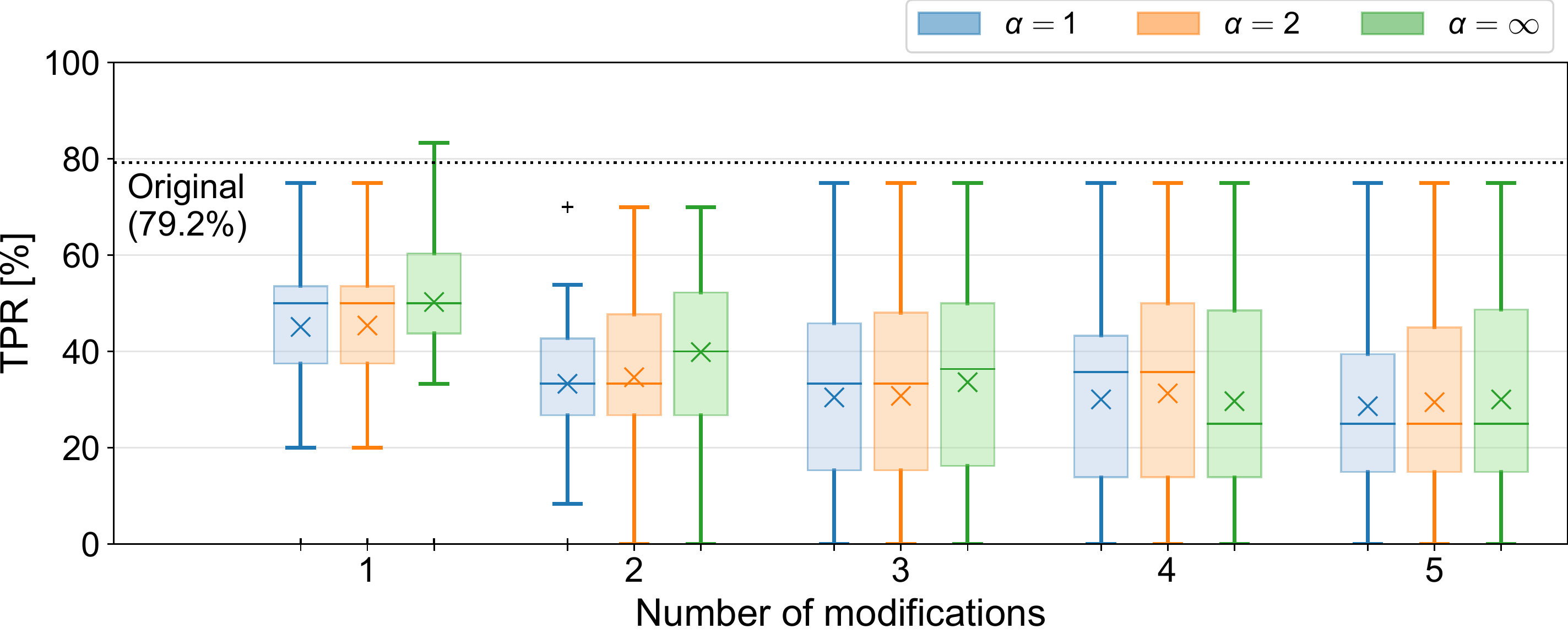}
    \caption{Experimental results for gate modification attacks using the TRIT-TC benchmark, including 15 netlists composed of 24\,141 normal nets and 86 Trojan nets in total.
    The x-axis represents the number of modifications, and the y-axis represents the TPR.
    The horizontal dotted line represents the average TPR for the original netlists. The horizontal line in a boxplot represents the median TPR. The `$\times$' plot denotes an average TPR, and the `+' plot denotes an outlier.}
    \label{fig:attack-results-trit}
\end{figure*}

\begin{table*}[t]
    \centering
    \begin{minipage}{0.96\linewidth}
    \centering
    \caption{Comparison of detection performance between R-HTDetector and normal models (TRIT-TC benchmark).}
    \label{tb:res-trittc}
    \scalebox{1.00}{
    \begin{tabular}{c||cc|cc||cc|cc|cc} \hline
         & \multicolumn{4}{c||}{Original samples} & \multicolumn{6}{c}{Gate modification-attacked samples (at the fifth modifications)} \\ \cline{2-11}
         & \multicolumn{2}{c|}{TPR} & \multicolumn{2}{c||}{TNR} & \multicolumn{2}{c|}{$\alpha=1$ (TPR)} & \multicolumn{2}{c|}{$\alpha=2$ (TPR)} & \multicolumn{2}{c}{$\alpha=\infty$ (TPR)} \\ \cline{2-11}
        Benchmarks & ~Normal~ & ~R-HTD~ & ~Normal~ & ~R-HTD~ & ~Normal~ & ~R-HTD~ & ~Normal~ & ~R-HTD~ & ~Normal~ & ~R-HTD~ \\ \hline
        c2670-T000 & 100.0\% & 100.0\% & 92.6\% & 85.9\% & 40.0\% & 80.0\% & 40.0\% & 80.0\% & 40.0\% & 80.0\% \\ 
        c2670-T001 & 100.0\% & 100.0\% & 93.1\% & 84.0\% & 75.0\% & 83.3\% & 75.0\% & 83.3\% & 75.0\% & 83.3\% \\ 
        c2670-T002 & 50.0\% & 75.0\% & 92.9\% & 90.9\% & 25.0\% & 62.5\% & 25.0\% & 62.5\% & 25.0\% & 62.5\% \\ 
        c3540-T000 & 80.0\% & 100.0\% & 100.0\% & 93.5\% & 11.1\% & 55.6\% & 11.1\% & 55.6\% & 20.0\% & 80.0\% \\ 
        c3540-T001 & 83.3\% & 100.0\% & 99.7\% & 64.6\% & 16.7\% & 83.3\% & 16.7\% & 83.3\% & 16.7\% & 83.3\% \\ 
        c3540-T002 & 83.3\% & 100.0\% & 91.1\% & 68.0\% & 15.4\% & 53.8\% & 15.4\% & 53.8\% & 0.0\% & 38.5\% \\ 
        c5315-T000 & 37.5\% & 87.5\% & 94.3\% & 78.4\% & 18.8\% & 37.5\% & 18.8\% & 37.5\% & 0.0\% & 75.0\% \\ 
        c5315-T001 & 77.8\% & 77.8\% & 92.1\% & 86.3\% & 50.0\% & 62.5\% & 50.0\% & 62.5\% & 62.5\% & 75.0\% \\ 
        c5315-T002 & 80.0\% & 100.0\% & 93.6\% & 71.0\% & 20.0\% & 80.0\% & 20.0\% & 80.0\% & 20.0\% & 80.0\% \\ 
        s1423-T000 & 75.0\% & 100.0\% & 97.7\% & 90.8\% & 50.0\% & 83.3\% & 50.0\% & 83.3\% & 50.0\% & 83.3\% \\ 
        s1423-T001 & 83.3\% & 83.3\% & 98.1\% & 91.9\% & 35.7\% & 92.9\% & 35.7\% & 92.9\% & 35.7\% & 92.9\% \\ 
        s1423-T002 & 37.5\% & 100.0\% & 96.5\% & 86.9\% & 56.3\% & 100.0\% & 53.3\% & 100.0\% & 80.0\% & 100.0\% \\ 
        s13207-T000 & 100.0\% & 100.0\% & 99.4\% & 96.2\% & 0.0\% & 76.9\% & 0.0\% & 76.9\% & 0.0\% & 76.9\% \\ 
        s13207-T001 & 100.0\% & 100.0\% & 98.1\% & 96.1\% & 28.6\% & 64.3\% & 28.6\% & 64.3\% & 21.4\% & 85.7\% \\ 
        s13207-T002 & 100.0\% & 100.0\% & 99.5\% & 95.5\% & 50.0\% & 100.0\% & 50.0\% & 100.0\% & 37.5\% & 100.0\% \\ \hline
        Average & 79.2\% & 94.9\% & 95.9\% & 85.3\% & 32.8\% & 74.4\% & 32.6\% & 74.4\% & 32.3\% & 79.8\% \\ \hline
    \end{tabular}
    }
    \begin{flushleft}
        \footnotesize{* R-HTD: R-HTDetector}
    \end{flushleft}
    \end{minipage}
\end{table*}

\smallskip
\noindent
\textbf{Gate modification attack.}
Similar to Section~\ref{sub:results}, we evaluate the generalized gate modification attacks.
Fig.~\ref{fig:attack-results-trit} shows the results for the attacks with $\alpha$-TCD using the normal model.
It can be seen that increasing the number of modifications decreases the average TPRs.
The average TPRs at $\alpha=1$ and $2$ are almost the same, while the average TPR at $\alpha=\infty$ becomes closer to the other TPRs as the number of modifications increases.
At the third modification, the average TPRs decrease to less than 40\% in any $\alpha$-TCD.
The average TPRs at the fourth and fifth modifications are similar to those at the third modification because the HT is tiny and the limited number of Trojan nodes can be modified.
Here, we set the number of modifications to $K = 4$ in this experiment because the average TPRs are almost the same as for $K = 5$.

\begin{figure}[t]
    \centering
    \includegraphics[width=1.\columnwidth]{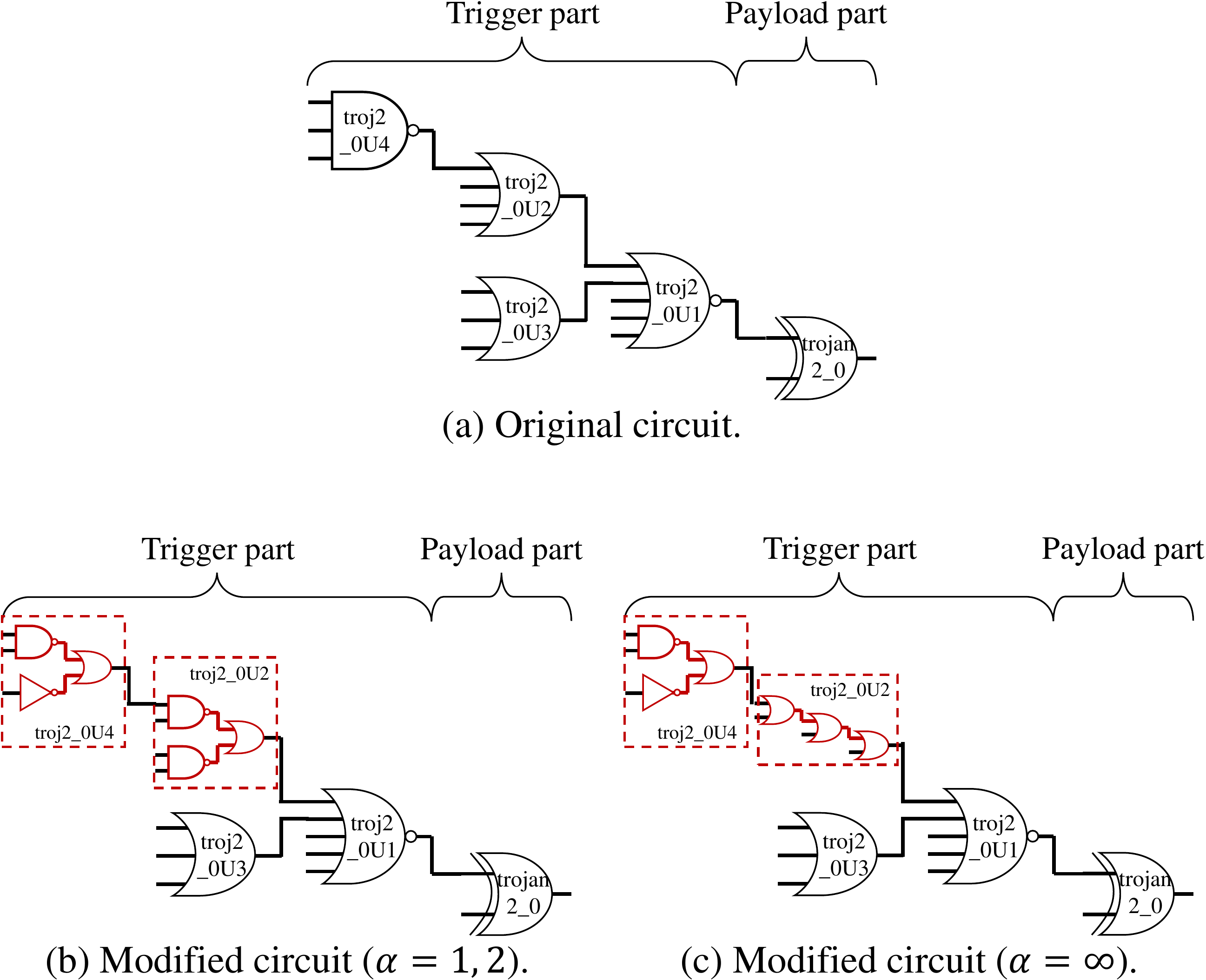}
    \caption{Gate modification attack on s13207-T002.
    Two gates are replaced with logically equivalent circuits (shown in red).
    }
    \label{fig:s13207-T002-modification}
\end{figure}

The effect of $\alpha$ for the gate modification attack is different between the Trust-HUB benchmarks and the TRIT-TC benchmarks.
Table~\ref{tb:res-trittc} shows the experimental results for the TRIT-TC benchmark.
The sixth, eighth, and tenth columns show the results of the gate modification attack for the normal model.
Although $\alpha$-TCD at $\alpha=\infty$ becomes the highest average TPR in Trust-HUB, the metric achieves the lowest average TPR in TRIT-TC.
It is considered that there are a few modifiable gates in the TRIT-TC benchmark, and $\alpha=\infty$ can fit in such a situation.
Fig.~\ref{fig:s13207-T002-modification} shows the gate modification attack on s13207-T002.
The two modifiable gates are replaced based on the modifications patterns.
Gate \texttt{troj2\_0U2} is replaced with pattern \textit{m5} when $\alpha = 1$ or $2$, whereas the gate is replaced with pattern \textit{m6} when $\alpha = \infty$.
Since pattern \textit{m6} increases the number of logic levels more than pattern \textit{m5}, the features of more Trojan nets are modified.
As a result, the detection rate is further decreased when $\alpha = \infty$.

\smallskip
\noindent
\textbf{Adversarial training.}
Next, we evaluate the robustness of R-HTDetector using the TRIT-TC benchmark.
The last row in Table~\ref{tb:res-trittc} shows the results.
Similar to the Trust-HUB benchmark, R-HTDetector gains an average TPR of 94.9\% for the original samples, which outperforms the normal model.
However, R-HTDetector decreases the average TNR compared to the normal model.
The reason for this result is the same conclusion reached after the Trust-HUB benchmark experiment; i.e., R-HTDetector does not incorporate adversarial examples for normal nets.
Furthermore, the HTs in the TRIT-TC benchmark tend to have feature values similar to the normal nets because of the minute HT size.
Although it is difficult to balance the TPR and TNR, the TNR can be improved by incorporating adversarial examples for normal nets or by changing the weight of the loss function.
Nevertheless, R-HTDetector still retains more than 85\% of the average TNR.
Thus, R-HTDetector achieves sufficient detection performance for the original samples.

Table \ref{tb:res-trittc} also shows that the TPRs with the normal model for the gate modification-attacked samples are significantly decreased.
Although the average TPR is 79.2\% with the normal model for the original samples, it decreases to lower than 33\% in all $\alpha$ cases for the gate modification-attacked samples.
However, R-HTDetector successfully detects the attacked samples, and the average TPRs are recovered to grater than 74\%.
Although the TRIT-TC benchmark netlists are different from the Trust-HUB netlists as mentioned at the top of this subsection, R-HTDetector is successful.

From these results, the adversarial training with TTCD successfully overcomes any gate modification attack with $\alpha$-TCD using various benchmarks.

\smallskip
We conclude the experiments of R-HTDetector using the two benchmarks as follows:
First, R-HTDetector increases the average TPR for the original samples, which successfully expands the classification area of HTs, including the gate modification-attacked samples.
The TNRs are still more grater 85\% even when the HTs are minute.
Second, R-HTDetector successfully recovers TPRs for gate modification-attacked samples with any $\alpha$-TCD attack.
Even when the cell library or circuit designs vary, the R-HTDetector successfully works.

\section{Related Works}

This section presents several related works on this paper.

\subsection{Other HT Detection Models}

Machine learning-based HT detection methods that target other than gate-level netlists have been researched~\cite{8952724}.
Numerous feature types learned by machine learning models were proposed.
In side channel-based HT detection methods, path-delay and power consumption are often employed as the features representing the target circuit.
Such feature values can be easily perturbed by the proposed method.
Although the proposed method focuses on the structural features of gate-level netlists, the key idea is to add perturbation to the trained features by replacing a logic gate with a set of logically equivalent circuits.
In this sense, the idea can be applied to the method that utilizes the feature values that can be altered by logically equivalent modification.

Switching probability-based and similar approaches are not addressed in this paper because logically equivalent modification does not change the switching probability.
Specifically, the SCOAP value-based approach~\cite{7577739} does not affect the perturbation by a gate modification attack.
However, the approach may not be applicable to specific application circuits~\cite{conf/ismvl/ItoUH21}.
In terms of the adversarial example attacks with the approach, adversaries may perturb the switching probability of trigger circuits, which makes it difficult for machine learning models to determine the threshold of HT detection.
One solution to constructing a robust model against adversarial examples would be to combine different approaches, such as the SCOAP value-based and structural feature-based approaches.

\subsection{Adversarial Examples on the Tabular Dataset}

The feature values listed in Table~\ref{tb:11_features} are structured as a \emph{tabular} dataset.
Methods of adversarial examples on tabular datasets were proposed such as presented in \cite{info12090375} and \cite{conf/aaai/CartellaAFYAE21}.
The methods synthesize adversarial examples on tabular datasets by using generative adversarial network~(GAN) or genetic algorithms.
In gate-level netlists, feature extraction is an irreversible process.
It is extremely difficult to reproduce a circuit from feature values.
Therefore, generating adversarial examples by directly perturbing feature values is not reasonable.

\subsection{Adversarial Examples on the Graph Dataset}

Graph learning is a growing research area.
The methods of HT detection using graph learning have recently been proposed~\cite{9474174,muralidhar2021contrastive,arxiv:2112.02213v1}.
These methods learn the structural features of HTs, and thus, the proposed method can be applied.

Several adversarial example attack and defense methods on graph data were proposed~\cite{10.1145/3447556.3447566}.
Most of them modify the graph data by adding and/or removing edges and/or nodes.
Such manipulation may destroy the functionality of original circuits; therefore, it is not applicable to gate-level netlists.
To establish a more sophisticated method of generating adversarial examples on gate-level netlists, it is necessary to ensure that the generated example properly works as a circuit after modification of the graph data.

\section{Conclusion}
\label{sec:conclusion}


This paper presents \emph{gate modification attacks} against HT detection at the gate label with machine learning and a new HT detection method, \emph{R-HTDetector}.
We first generalized gate modification attacks for realizing attacks with various purposes.
Then we established that R-HTDetector is robust to any gate modification attack from a theoretical point of view.
We demonstrated through experiments that generalized gate modification attacks significantly degrade the performance of the detection model without adversarial training.
We also showed that R-HTDetector overcomes any gate modification attack while maintaining the original accuracy. 

In the future, we will apply our adversarial training to other advanced machine learning models.
Additionally, we will enhance the classification performance by balancing the TPRs and TNRs.



\bibliographystyle{IEEEtran}
\bibliography{References}

\end{document}